\documentclass[10pt,conference]{IEEEtran}
\IEEEoverridecommandlockouts
\usepackage{times}
\usepackage{soul}
\usepackage{url}
\usepackage[hidelinks]{hyperref}
\usepackage[utf8]{inputenc}
\usepackage[small]{caption}
\usepackage{graphicx}
\usepackage{amsmath}
\usepackage{amsthm}
\usepackage{booktabs}

\usepackage[ruled,linesnumbered]{algorithm2e}
\usepackage[most]{tcolorbox}

\usepackage{xcolor}      
\usepackage{hyperref}    
\usepackage{rotating}  
\usepackage{colortbl}
\usepackage{mathtools}

\usepackage{amssymb}  
\usepackage{threeparttable} 
\usepackage{booktabs}       
\newtheorem{definition}{Definition}

\usepackage{titletoc}
\usepackage{fontawesome5}
\newtheoremstyle{rqplain}
  {1pt}{0pt}{\bfseries}{0pt}{\bfseries}{.}{0.5em}
  {\thmname{#1}\thmnumber{#2}\thmnote{ (#3)}}
\theoremstyle{rqplain}
\newtheorem{researchquestion}{RQ}

\newcommand{\RQicon}{\textcolor{black}{\faIcon{cog}}\,}

\makeatletter
\let\old@researchquestion\researchquestion
\def\researchquestion{%
  \def\@begintheorem##1##2[##3]{%
    \item[\hskip\labelsep\normalfont\bfseries\RQicon ##1\ ##2\if\relax\detokenize{##3}\relax\else\ (##3)\fi.]%
  }%
  \old@researchquestion
}
\makeatother

\theoremstyle{plain}

\usepackage{multirow}
\usepackage{makecell}
\usepackage{bm}
\usepackage{cleveref}
\crefname{figure}{Fig.}{Figs.}
\crefname{table}{Tab.}{Tabs.}
\crefname{equation}{Eq.}{Eqs.}
\crefname{section} {Sec.}{Secs.}
\crefname{chapter} {Ch.}{Chs.}
\crefname{appendix}{App.}{Apps.}

\definecolor{highlight}{HTML}{C8E6C9}
\definecolor{highlight2}{HTML}{DCEDC8}
\definecolor{highlight3}{HTML}{F9FBE7}

\newcommand{\mname}{DST}

\newcommand{\eg}{\emph{e.g.}}

\newtheorem{theorem}{Theorem}
\newtheorem{proposition}{Proposition}

\hypersetup{
    colorlinks=false,          
    citebordercolor={0 1 0},   
    linkbordercolor={1 0 0},   
    urlbordercolor={0 1 1},    
    pdfborder={0 0 2}          
}

\begin{document}

\title{Design-Specification Tiling for ICL-based CAD Code Generation}


\author{
Yali Du,
San-Zhuo Xi,
Hui Sun,
Ming Li\\
National Key Laboratory for Novel Software Technology,
School of Artificial Intelligence, \\Nanjing University, China \protect\\
\{duyl, xisz, sunh, lim\}@lamda.nju.edu.cn%
}


\maketitle

\begin{abstract}
Large language models~(LLMs) have demonstrated remarkable capabilities in code generation, yet their performance remains limited on domain-specific tasks such as Computer-Aided Design~(CAD) code generation, largely due to the scarcity of high-quality training data.
In-Context Learning~(ICL) provides a training-free alternative by prompting LLMs with task-specific exemplars, but its effectiveness critically depends on how exemplars are selected.
Existing selection strategies mainly rely on similarity or point-wise diversity, often overlooking the compositional nature of CAD design specifications, where a query may involve multiple functional requirements, geometric constraints, and design primitives.
As a result, selected exemplars can be individually relevant but collectively redundant, providing insufficient coverage for complex design requirements.
In this work, we propose \emph{knowledge sufficiency} as a principled objective for exemplar selection, aiming to select a compact set of exemplars that maximally satisfies the requirements contained in a target design specification.
To instantiate this objective, we introduce \emph{Design-Specification Tiling~(DST)}, which estimates knowledge sufficiency through a surrogate tiling ratio by decomposing design specifications into multi-granular components and measuring the proportion of query components covered by selected exemplars.
We further show that optimizing this objective can be formulated as a submodular maximization problem, and develop a polynomial-time greedy algorithm tailored to this setting with a $(1-1/e)$-approximation guarantee.
Extensive experiments across multiple LLMs demonstrate that DST substantially improves CAD code generation quality and consistently outperforms existing ICL exemplar selection strategies, highlighting the importance of requirement-level knowledge coverage for domain-specific code generation.
\end{abstract}

\begin{IEEEkeywords}
In-Context Learning, CAD Code Generation, Submodular Optimization
\end{IEEEkeywords}

 \section{Introduction}\label{sec:intro}

Computer-Aided Design~(CAD) serves as a cornerstone of the engineering design process in traditional manufacturing, bridging conceptual designs with precise 3D objects through underlying code that enables model customization and automation.
However, designing complex models in professional CAD software presents substantial challenges, even for highly experienced practitioners.
Consequently, leveraging AI to automatically generate CAD code represents a promising avenue for advancing manufacturing capabilities~\cite{khan2024text2cad,alam2024gencad}.

Recent advances in large language models~(LLMs) have demonstrated remarkable capabilities in code generation~\cite{pmlr-v267-liu25ah,yang2024chain,lyu2025top,wang2025teaching}, motivating their application to CAD code generation~\cite{guan2025cad,niu2025cad}.
However, unlike general-purpose programming languages, CAD code is domain-specific, tightly coupled to specialized software, involves precise dimensions and physical constraints, and is typically hidden behind a GUI.
Most engineers interact with CAD exclusively through visual interfaces, yet the software fundamentally relies on underlying code to generate and modify geometry.
Because CAD code remains largely inaccessible to practitioners and is rarely documented or shared publicly, training corpora for this domain are scarce. 
Therefore, LLMs exhibit inferior performance on CAD code generation tasks, such as Qwen3-30B-A3B, which yields only 15.67\% syntactically valid CAD code for complex tasks in a zero-shot setting~(as shown in \cref{tab:baselines}).

Most existing approaches address this challenge by continually fine-tuning open-source LLMs on manually annotated CAD datasets~\cite{doris2025cad,he2025cad,guan2025cad}.
However, these approaches face fundamental limitations:
fine-tuning demands substantial computational resources, while manual annotation is costly and high-quality proprietary designs remain confidential, ultimately limiting dataset scale and quality.
Recently, In-Context Learning~(ICL) has emerged as a training-free alternative that enhances LLM performance by incorporating task-specific exemplars directly into the input context~\cite{li2024meta,koike2024outfox,yali_tse25_cast}.
This approach enables leveraging state-of-the-art~(SOTA) models~(including open- and closed-source LLMs) without requiring access to training infrastructure or large-scale datasets.

The effectiveness of ICL depends critically on selecting appropriate exemplars.
However, exemplar selection for CAD code generation remains largely unexplored.
Selecting appropriate exemplars for natural language design-specification poses significant challenges.
CAD design specifications typically comprise multiple geometric elements, spatial relationships, and topological constraints, requiring exemplars that provide sufficient information across these multiple dimensions.
Existing ICL methods focus primarily on similarity~\cite{lee2025cropper,DBLP:conf/icml/SinghMDHCS25,DBLP:conf/acl/00010H25} or point-wise diversity~\cite{kapuriya2025exploring,xiao2025role}, often resulting in redundant selections or multi-objective balance, that fail to collectively cover the compositional requirements of complex design specifications.


In this work, we return to first principles of ICL: \emph{exemplar selection should provide sufficient task-relevant knowledge to enable LLMs to solve the current task}.
Motivated by this principle, we propose \textbf{knowledge sufficiency} as a novel objective for exemplar selection that aims to maximally satisfy all requirements within design specifications of the current query.
However,  two practical key challenges arise: 
1)~How to quantify the sufficiency of a group selected exemplars?
2)~How to maximize this sufficiency score with exemplar selection despite it is NP-hard?

To address these challenges, we propose an approach termed \textbf{Design-Specification Tiling~(DST)} that quantifies and maximizes knowledge sufficiency through a surrogate objective.
For the first challenge, DST extracts multi-granular design components from design specifications using n-grams.
DST then tiles the current query's design components using components from selected exemplars.
The proportion of covered components, named \textbf{tiling ratio}, serves as a surrogate measure of knowledge sufficiency.
For the second challenge, we demonstrate that maximizing this surrogate objective constitutes submodular maximization.
While finding the optimal solution is NP-hard, a greedy algorithm provides an efficient solution: starting with an empty set, it iteratively selects exemplars that maximize marginal gain until reaching capacity.
This greedy approach has been proven to achieve a $(1-1/e)$-approximation guarantee in polynomial time, meaning its performance is at least $(1-1/e)$ of optimal.

%
%
%
Our main contributions are summarized as follows:
\begin{itemize}
    \item \emph{\textbf{Knowledge Sufficiency}: A Principled Objective for ICL.} 
    We propose knowledge sufficiency as a novel objective that maximally satisfies design specification requirements within the query, thereby maximizing task-relevant knowledge provided to LLMs.
    
    \item \emph{\textbf{Design-Specification Tiling Ratio}: A Tractable Surrogate Objective.} 
    We quantify knowledge sufficiency via multi-granular design-specification tiling ratio, which extracts n-gram design components and measures the proportion of query components covered by exemplars.
    
    \item \emph{\textbf{Greedy Submodular Maximization}: An efficient Solution with Approximation Guarantees.} 
    We demonstrate this NP-hard optimization problem is submodular and provide a polynomial-time greedy algorithm with a~$(1-1/e)$-approximation guarantee.
    
\end{itemize}

\section{Preliminaries}\label{sec:pre}

In this section, we formulate ICL-based CAD code generation and articulate the key challenges that motivate our approach.

\subsection{ICL-based CAD Code Generation}\label{ssec:formulation}
Given a natural language design specification $X$ describing the CAD model, $X$ typically encompasses multiple aspects of the target model, including geometric elements~(\eg, cylinders, spheres), spatial relationships~(\eg, adjacent, centered), and topological constraints~(\eg, union, intersection).
The goal of CAD code generation is to use an LLM to transform the specification into executable CAD code:~$ X_\text{query} \xrightarrow[]{\text{LLM}} \hat{Y}$, where $\hat{Y}$ realizes the specified design and enables CAD software to render the corresponding digital model.

In ICL, a database $\mathcal{D}=\{(X_i, Y_i)\}_{i=1}^n$ contains $n$ pairs of design specifications and their CAD code.
The database indices are denoted by~$\mathbb{N}_n=\{1,2,\dots, n\}$, and~$S \subseteq \mathbb{N}_n$ represents a subset of indices.
Thus, ICL-based CAD code generation with $k$ exemplars as context can be expressed as:
\begin{equation}\label{eq:icl}
    \mathrm{prompt}[
        \underbrace{X_{S[1]}, Y_{S[1]}, \dots, X_{S[k]}, Y_{S[k]}}_{\text{$S$: context of $k$ paired exemplars}};
         X_{\text{query}}
    ] \xrightarrow[]{\text{LLM}} \hat{Y},
\end{equation}
where $S[i]$ denotes the index of the $i$-th selected exemplar.

\subsection{Motivated Objective: Knowledge Sufficiency}\label{ssec:motivation}
\begin{figure}[!t]
    \centering
    \includegraphics[width=\linewidth]{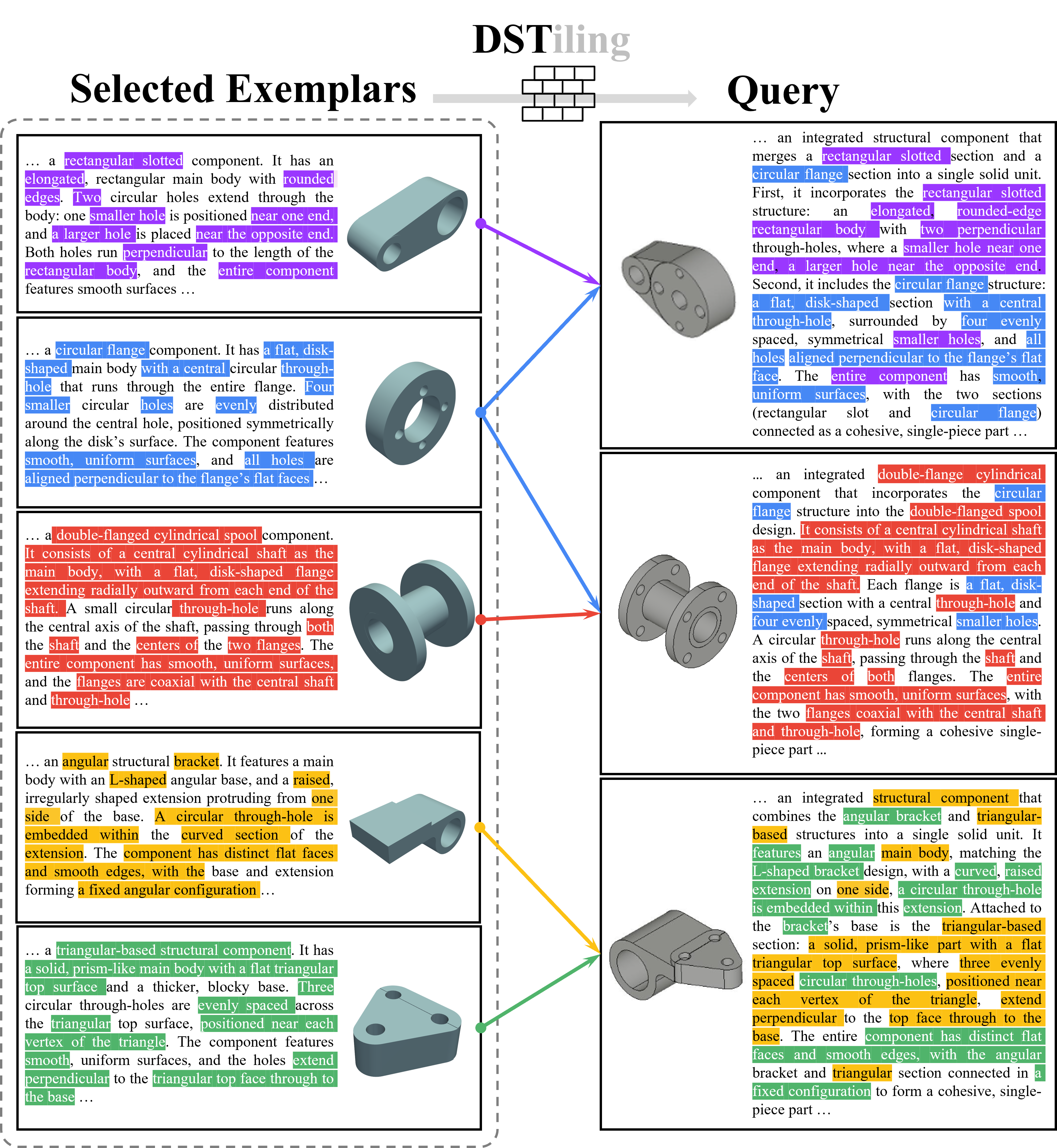}
    \caption{\label{fig:motivation}
        Compositional complexity in CAD design-specifications. The query (right) contains multiple design components (colored highlights), each requiring specific knowledge. Exemplars (left) each cover different component subsets. Effective selection should ensure collective coverage across all components.
    }

\end{figure}

The effectiveness of ICL depends on selecting appropriate exemplars that provide sufficient task-relevant knowledge to the LLM.
However, existing strategies face key limitations when applied to CAD code generation.
CAD design specifications comprise various heterogeneous components~(\eg, geometric elements, spatial relationships, and topological constraints), each requiring different types of knowledge.

\Cref{fig:motivation} illustrates this compositional complexity.
The query~(right) combines multiple distinct design elements: rectangular slotted structures, circular flanges, double-flanged cylinders, angular brackets, and triangular bases.
Mainstream ICL methods select the top-$k$ most similar exemplars, often producing redundant selections that cover overlapping components while missing others~\cite{DBLP:conf/acl/00010H25,li2025large}.
Recent approaches incorporate diversity constraints but optimize for point-wise diversity without considering collective coverage of the query's compositional requirements~\cite{kapuriya2025exploring,xiao2025role}.

We return to the first principle of ICL: \emph{providing sufficient task-relevant knowledge to enable the LLM to solve the current task}.
We formalize this through knowledge sufficiency: selecting exemplars that collectively cover all components of the query.
Formally, we seek an exemplar set $S^*$ of size $k$ that maximizes:
\begin{equation}\label{eq:objective}
    S^* = \mathrm{argmax}_{S \subseteq \mathbb{N}_n, |S|=k} f_\text{suff}(S; X_\text{query}),
\end{equation}
where $f_\text{suff}(S; X_\text{query})$ quantifies the knowledge sufficiency provided by $S$ for query $X_\text{query}$.

This objective presents two practical challenges: (1)~How to measure knowledge sufficiency? (2)~How to efficiently search the exponential space of $\binom{n}{k}$ possible sets?
These challenges motivate \mname{}~(\Cref{sec:method}), which quantifies sufficiency through multi-granular component tiling and leverages submodular optimization to find an efficient solution.

\section{Method: Design-Specification Tiling~(DST)}\label{sec:method}
To operationalize the ideal knowledge sufficiency objective, we propose Design-Specification Tiling~(DST), which addresses both the quantification and optimization challenges through a surrogate objective.
The core insight of DST is to measure knowledge sufficiency through the \emph{tiling ratio}: 
decomposing design specifications into multi-granular components and quantifying the proportion of query components that are tiled by selected exemplars.
This tiling ratio serves as a computable surrogate for knowledge sufficiency, a higher ratio indicates that exemplars collectively provide more comprehensive knowledge about the query's requirements.

DST operates in three stages:
\textbf{(1)}~Extract multi-granular design-specification components from natural language using n-grams~(\cref{sec:design-specification});
\textbf{(2)}~Formulate the tiling ratio as a surrogate objective that measures how well exemplar components tile the query components~(\cref{sec:tiling-objective});
\textbf{(3)}~Optimize this objective efficiently via submodular maximization~(\cref{sec:optimization}).

\subsection{Multi-Granular Specification Components}\label{sec:design-specification}

To compute the tiling ratio, we introduce a structured representation that decomposes design specifications into \emph{multi-granular components}.
This representation enables quantifying the semantic overlap between queries and exemplars at multiple levels of granularity, from atomic design elements to complex compositional patterns.

Given a natural language design specification $X$, we extract components using n-grams of varying sizes.
An n-gram of size $n$ captures a consecutive sequence of $n$ words, representing a design element at a specific semantic granularity.
Unigrams such as ``cylinder'' capture atomic geometric primitives, while trigrams such as ``cylinder with holes'' capture compositional relationships between elements.
This multi-granular representation is critical for CAD specifications, which inherently describe designs at multiple levels, from individual geometric shapes to their spatial arrangements and topological operations.
Formally, we apply a sliding window of size $n$ to extract all consecutive n-grams from specification $X$.
The set of n-grams of size $n$ is defined as:
\begin{equation}
    \mathcal{C}^{(n)}(X) = \{X[i:i+n] \mid i = 1, 2, \ldots, L-n+1\}\,,
\end{equation}
where $X[i:i+n]$ denotes the n-gram starting at position $i$.

Moreover, to capture design knowledge across semantic levels, we extract n-grams for exponentially spaced sizes $N = \{2, 4, 8, 16, 32\}$. \

The complete component set is:
\begin{equation}\label{eq:components}
    \mathcal{C}(X) = \bigcup_{n\in N} \mathcal{C}^{(n)}(X)\,.
\end{equation}
For $i$-th exemplar with specification $X_i$, we denote its component set as $\mathcal{C}_i = \mathcal{C}(X_i)$.
For query $X_{\text{query}}$, we denote its component set as $\mathcal{C}_{\text{query}} = \mathcal{C}(X_{\text{query}})$.
This exponential spacing balances expressiveness with computational efficiency, capturing design elements from local word pairs to longer compositional phrases while avoiding exhaustive enumeration of all intermediate sizes. \

\subsection{Tiling Ratio}\label{sec:tiling-objective}

Having extracted multi-granular components, we now formulate the tiling ratio as a tractable surrogate for knowledge sufficiency.
The key idea is to measure what proportion of the query's components are tiled by components from selected exemplars.
For an exemplar set $S \subseteq \mathbb{N}_n$, the union of components from all selected exemplars is:
\begin{equation}
    \mathcal{C}(S) = \bigcup_{i \in S} \mathcal{C}_i.
\end{equation}
The components that tile the query are those shared between the exemplar set and the query: $\mathcal{C}(S) \cap \mathcal{C}_{\text{query}}$.
The tiling ratio quantifies the proportion of query components that are tiled:
\begin{equation}\label{eq:tiling-ratio}
    f_\text{suff}(S; X_{\text{query}}) :=  \frac{w(\mathcal{C}(S) \cap \mathcal{C}_{\text{query}})}{w(\mathcal{C}_{\text{query}})}\,.
\end{equation}
Here $w(\mathcal{C}(X))=\sum_{n\in N}n\cdot|\mathcal{C}^{(n)}(X)|$ is the weighted tiling of components in $X$, where longer specification components receive proportionally higher weight, thus encoding the disparate impact of multi-granularity specifications.
This ratio ranges from 0~(no query components are tiled) to 1.0~(all query components are tiled).
A higher tiling ratio indicates that the selected exemplars collectively provide knowledge about more aspects of the query.

Our goal is to select $k$ exemplars that maximize the tiling ratio.
Since $|\mathcal{C}_{\text{query}}|$ is constant for a given query, maximizing the tiling ratio is equivalent to maximizing the number of tiled query components:
\begin{equation}\label{eq:tiling-objective}
    S^* = \arg\max_{S \subseteq \mathbb{N}_n, |S|=k} w(\mathcal{C}(S) \cap \mathcal{C}_{\text{query}})\,,
\end{equation}
where $k$ is the exemplar set capacity, a hyperparameter that controls the number of selected exemplars.

This formulation follows our knowledge sufficiency principle: we seek exemplars whose components, when combined, tile the maximum portion of the query's compositional structure.
However, this optimization involves searching an exponential space of $\binom{n}{k}$ possible exemplar combinations.

\begin{figure*}[!t]
    \centering
    \includegraphics[width=\linewidth]{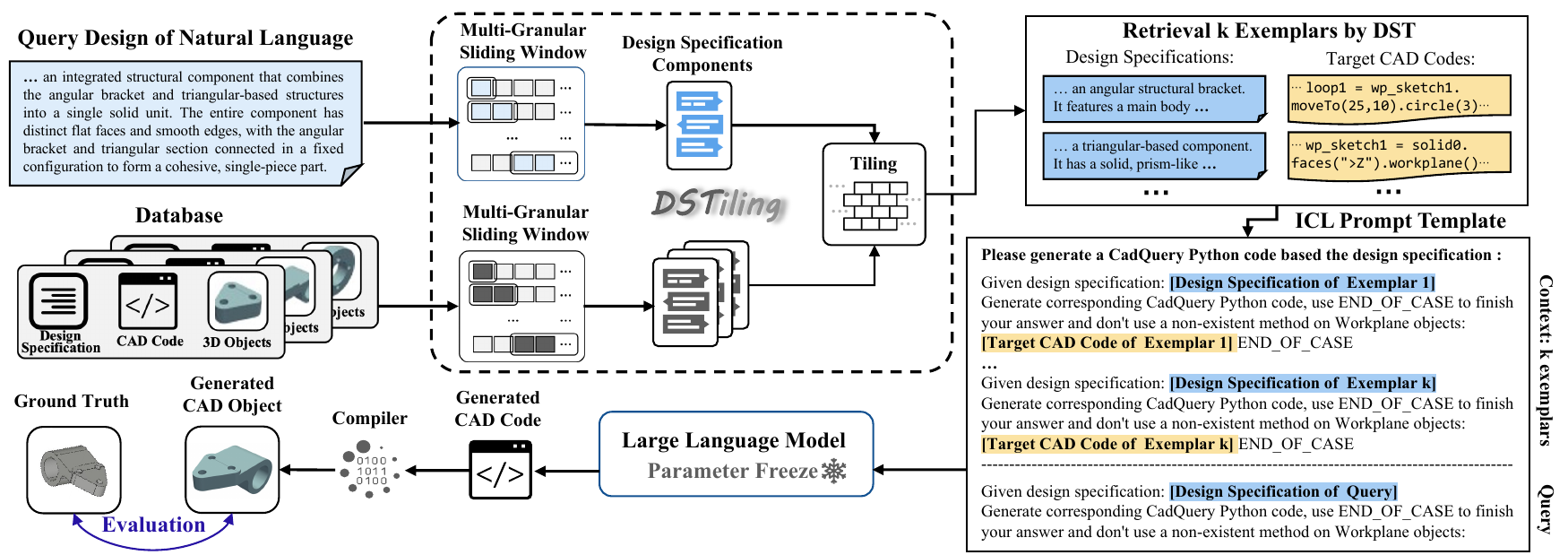}
    \caption{Overview of the Design-Specification Tiling~(DST) framework. Design specification components are extracted from the query and database using multi-granular sliding windows~(left). The DST retrieval algorithm selects $k$ exemplars that maximize coverage of query components through greedy optimization~(center). Selected exemplars are formatted into an ICL prompt template to guide the large language model in generating CAD code, which is then compiled and evaluated~(right).}
    \label{fig:pipeline}
\end{figure*}

\subsection{Efficient Submodular Maximization}\label{sec:optimization}

The optimization problem in \cref{eq:tiling-objective} requires selecting $k$ exemplars from an exponential search space of $\binom{n}{k}$ possibilities, which is computationally intractable for exhaustive search.
We address this challenge by leveraging the submodular structure of the tiling objective, which enables efficient approximation with provable guarantees.

We first establish the key structural property of our objective function through the concept of submodularity.
\begin{definition}[Submodularity~\cite{nemhauser1978best}]
A set function $f: 2^{\Omega} \rightarrow \mathbb{R}_{\geq 0}$ is submodular if for all $A \subseteq B \subseteq \Omega$ and $x \in \Omega \setminus B$:
\begin{equation}\label{eq:submod-condition}
    f(A \cup \{x\}) - f(A) \geq f(B \cup \{x\}) - f(B)\,,
\end{equation}
and $f$ is monotone if $f(A) \leq f(B)$ for all $A \subseteq B \subseteq \Omega$.
\end{definition}

Intuitively, submodularity captures the diminishing returns property:
adding an exemplar to a smaller set provides at least as much benefit as adding it to a larger set, since the larger set likely already contains overlapping components.

\begin{proposition}\label{prop:submodular}
The tiling objective $f(S) = |\mathcal{C}(S) \cap \mathcal{C}_{\text{query}}|$ satisfies the following properties:
\begin{itemize}
    \item \textit{Non-negativity}: $f(S) \geq 0$ for all $S \subseteq \mathbb{N}_n$;
    \item \textit{Monotonicity}: $f(A) \leq f(B)$ for all $A \subseteq B$;
    \item \textit{Submodularity}: For all $A \subseteq B$ and $x \notin B$, $f(A \cup \{x\}) - f(A) \geq f(B \cup \{x\}) - f(B)$.
\end{itemize}
\end{proposition}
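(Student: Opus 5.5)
The plan is to view $f$ as a coverage function over the fixed finite ground set $U := \mathcal{C}_{\text{query}}$ and to prove the three properties through the marginal gain. For each exemplar $i \in \mathbb{N}_n$, set $T_i := \mathcal{C}_i \cap U$. Since intersection distributes over union,
\begin{equation*}
f(S) = \Bigl|\bigcup_{i\in S} \mathcal{C}_i \cap U\Bigr| = \Bigl|\bigcup_{i \in S} T_i\Bigr| .
\end{equation*}
With this rewriting, $f$ is a standard set-cover function, and everything else follows from elementary set identities. The same argument works unchanged for the weighted version $w(\cdot)$ used in \cref{eq:tiling-objective}. The reason is that $w$ is a nonnegative additive measure on components: each $n$-gram is counted with weight $n>0$. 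So I would write the proof for an arbitrary additive weight with nonnegative entries, which covers both $|\cdot|$ and $w$.

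Non-negativity is immediate, because $f(S)$ is a cardinality (or a sum of nonnegative weights). For monotonicity, $A \subseteq B$ gives $\bigcup_{i\in A} T_i \subseteq \bigcup_{i \in B} T_i$, and cardinality (respectively $w$) is monotone under inclusion.

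For submodularity, I would first compute the marginal gain explicitly. Write $U(S) := \bigcup_{i\in S} T_i$. For $x \notin S$,
\begin{equation*}
f(S\cup\{x\}) - f(S) = |U(S) \cup T_x| - |U(S)| = |T_x \setminus U(S)| ,
\end{equation*}
which holds because $U(S)$ and $T_x \setminus U(S)$ are disjoint. Now take $A \subseteq B$ and $x \notin B$. Then $U(A) \subseteq U(B)$, so $T_x \setminus U(B) \subseteq T_x \setminus U(A)$, and therefore the gain at $A$ is at least the gain at $B$. That is exactly \cref{eq:submod-condition}.

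There is no real obstacle here. The only step that needs care is the disjoint-union identity for the marginal gain, since it is the one place where additivity of the measure is used. So I would state explicitly that $w$ is additive over disjoint sets of components. This is true because $w$ sums the weight $n$ over the $n$-grams it contains, with each component carrying its own length as weight. Once that is checked, the weighted case follows line by line.
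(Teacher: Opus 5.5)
Your proposal is correct and follows essentially the same route as the paper: write $f(S)$ as the (weighted) size of $\mathcal{T}(S)=\bigcup_{i\in S}(\mathcal{C}_i\cap\mathcal{C}_{\text{query}})$, use additivity over disjoint unions to express the marginal gain as the weight of $(\mathcal{C}_x\cap\mathcal{C}_{\text{query}})\setminus\mathcal{T}(A)$, and conclude from $\mathcal{T}(A)\subseteq\mathcal{T}(B)$ and monotonicity of the measure. Treating both the unweighted cardinality in the statement and the weighted $w$ in the paper's proof, with each $n$-gram carrying weight $n$, is a small but useful clarification of a mismatch the paper leaves implicit.
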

\begin{proof}\label{proof:submodular}
We first establish notation consistent with the main text. Let $\mathcal{C}_{\text{query}}$ denote the weighted multi-granular component set of the query $X_{\text{query}}$. For an exemplar subset $S \subseteq \mathbb{N}_n$, define 
\begin{equation}
\mathcal{C}(S) = \bigcup_{i\in S}\mathcal{C}_i
\end{equation}
as the union of components from all exemplars in $S$. The tiling objective (\cref{eq:tiling-objective}) is
\begin{equation}
    f(S) = w\left(\mathcal{C}(S) \cap \mathcal{C}_{\text{query}}\right),
\end{equation}
where the weighted size function 
\begin{equation}
w(\mathcal{C}(\cdot)) = \sum_{n\in \mathcal{N}} n \cdot |\mathcal{C}^{(n)}(\cdot)|
\end{equation}
assigns weight $n$ to each $n$-gram component. For convenience, define
\begin{equation}
\mathcal{T}(S) = \mathcal{C}(S) \cap \mathcal{C}_{\text{query}},
\end{equation}
so that $f(S) = w(\mathcal{T}(S))$.

To complete the proof, we establish that $f(S)$ satisfies non-negativity, monotonicity, and submodularity.

\noindent\textbf{Non-negativity.}
The weighted size function $w(\cdot)$ sums non-negative terms: each weight $n \in \mathcal{N}$ is positive and each cardinality $|\mathcal{C}^{(n)}(\cdot)|$ is non-negative. Therefore, $w(\mathcal{T}(S)) \geq 0$ for all $S \subseteq \mathbb{N}_n$, which implies $f(S) \geq 0$.

\noindent\textbf{Monotonicity.}
Consider arbitrary subsets $A, B \subseteq \mathbb{N}_n$ with $A \subseteq B$. By monotonicity of set union,
\begin{equation}
    \mathcal{C}(A) = \bigcup_{i\in A}\mathcal{C}_i \subseteq \bigcup_{i\in B}\mathcal{C}_i = \mathcal{C}(B).
\end{equation}
Intersecting with $\mathcal{C}_{\text{query}}$ preserves inclusion:
\begin{equation}
    \mathcal{T}(A) \subseteq \mathcal{T}(B).
\end{equation}
Since $w(\cdot)$ is monotone with respect to set inclusion, 
\begin{equation}
    f(A) = w(\mathcal{T}(A)) \leq w(\mathcal{T}(B)) = f(B).
\end{equation}
This establishes monotonicity.

\noindent\textbf{Submodularity.}
Define the marginal gain
\begin{equation}
    \Delta_A(x) := f(A\cup\{x\}) - f(A).
\end{equation}
Observe that
\begin{equation}
\begin{aligned}
    \mathcal{T}(A\cup\{x\}) &= \mathcal{C}(A\cup\{x\}) \cap \mathcal{C}_{\text{query}} \\
    &= \left(\mathcal{C}(A) \cup \mathcal{C}_x\right) \cap \mathcal{C}_{\text{query}} \\
    &= \mathcal{T}(A) \cup \left(\mathcal{C}_x \cap \mathcal{C}_{\text{query}}\right),
\end{aligned}
\end{equation}
where $\mathcal{C}_x$ is the component set of exemplar $x$. Since $w(\cdot)$ is additive over disjoint unions, the marginal gain is
\begin{equation}
    \Delta_A(x) = w\left(\mathcal{T}(A\cup\{x\}) \setminus \mathcal{T}(A)\right).
\end{equation}
This simplifies to
\begin{equation}
    \Delta_A(x) = w\left(\left(\mathcal{C}_x \cap \mathcal{C}_{\text{query}}\right) \setminus \mathcal{T}(A)\right).
\end{equation}
Similarly,
\begin{equation}
    \Delta_B(x) = w\left(\left(\mathcal{C}_x \cap \mathcal{C}_{\text{query}}\right) \setminus \mathcal{T}(B)\right).
\end{equation}

Since $A \subseteq B$, monotonicity yields $\mathcal{T}(A) \subseteq \mathcal{T}(B)$. For any fixed set, removing a larger set yields a smaller remainder:
\begin{equation}
    \left(\mathcal{C}_x \cap \mathcal{C}_{\text{query}}\right) \setminus \mathcal{T}(A) \supseteq \left(\mathcal{C}_x \cap \mathcal{C}_{\text{query}}\right) \setminus \mathcal{T}(B).
\end{equation}
Applying monotonicity of $w(\cdot)$ gives
\begin{equation}
    \Delta_A(x) \geq \Delta_B(x),
\end{equation}
verifying \eqref{eq:submod-condition}. This establishes submodularity.

All three properties are now proven.
\end{proof}

These properties enable the application of efficient greedy algorithms with strong theoretical guarantees.
For non-negative, monotone submodular functions, a greedy selection strategy achieves near-optimal performance~\cite{sunh_mdp3,yali_tse25_cast}, as formalized in the following theorem.

\begin{theorem}[$(1-1/e)$-Approximation~\cite{nemhauser1978best}]\label{thm:approximation}
For a non-negative, monotone submodular function $f$, the greedy algorithm that iteratively selects elements maximizing marginal gain produces a solution $\hat{S}$ of size $k$ satisfying:
\begin{align}
        f&(\hat{S}) 
            \geq \left( 1 - \left(1 - \frac{1}{k}\right)^k \right) f(S^*) \\
            &\geq \lim_{k \to +\infty} \left( 1 - \left(1 - \frac{1}{k}\right)^k \right) f(S^*) 
            = \left(1 - \frac{1}{e}\right) f(S^*)\,, \notag
\end{align}
where $S^* = \arg\max_{S \subseteq \Omega, |S| \leq k} f(S)$ is the optimal solution.
\end{theorem}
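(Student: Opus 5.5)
The plan is the classical Nemhauser--Wolsey--Fisher argument. We use only the three properties of Proposition~\ref{prop:submodular}: non-negativity, monotonicity, and diminishing marginal gains. Let $\hat{S}_0=\emptyset\subseteq \hat{S}_1\subseteq\dots\subseteq\hat{S}_k=\hat{S}$ be the greedy iterates, where $\hat{S}_{i+1}=\hat{S}_i\cup\{x_{i+1}\}$ and $x_{i+1}$ maximizes the marginal gain $\Delta_{\hat{S}_i}(x)=f(\hat{S}_i\cup\{x\})-f(\hat{S}_i)$. By monotonicity we may take $|S^*|=k$. Write $S^*=\{o_1,\dots,o_k\}$ and define the gap $g_i := f(S^*)-f(\hat{S}_i)$.

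\textbf{Step 1 (one-step inequality).} We will show that for each $i$,
\begin{equation*}
f(S^*)\le f(\hat{S}_i\cup S^*)\le f(\hat{S}_i)+\sum_{j=1}^{k}\Delta_{\hat{S}_i}(o_j)\le f(\hat{S}_i)+k\,\Delta_{\hat{S}_i}(x_{i+1}).
\end{equation*}
The first inequality is monotonicity. For the second, telescope $f(\hat{S}_i\cup\{o_1,\dots,o_j\})-f(\hat{S}_i\cup\{o_1,\dots,o_{j-1}\})$ over $j$. Each term is bounded by $\Delta_{\hat{S}_i}(o_j)$ using the submodularity condition \eqref{eq:submod-condition} with $A=\hat{S}_i$ and $B=\hat{S}_i\cup\{o_1,\dots,o_{j-1}\}$. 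If $o_j\in B$, the term is zero and $\Delta_{\hat{S}_i}(o_j)\ge 0$ by monotonicity. The third inequality is the greedy choice of $x_{i+1}$. Rearranging gives $g_i-g_{i+1}\ge g_i/k$, that is, $g_{i+1}\le(1-1/k)\,g_i$.

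\textbf{Step 2 (unrolling).} Iterating $k$ times gives $g_k\le(1-1/k)^k g_0$. Non-negativity gives $f(\emptyset)\ge0$, so $g_0\le f(S^*)$. Hence $f(\hat{S})\ge\bigl(1-(1-1/k)^k\bigr)f(S^*)$, which is the first inequality of Theorem~\ref{thm:approximation}.

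\textbf{Step 3 (the limit).} The sequence $(1-1/k)^k$ is increasing in $k$ with limit $1/e$. This follows from $\ln(1-t)\le -t$, which gives $(1-1/k)^k\le e^{-1}$ for every $k$. So $1-(1-1/k)^k\ge 1-1/e$, and this is the intended reading of the displayed chain. We note that the middle expression as literally written is the limit $1-1/e$, and the displayed inequality is really this uniform bound $(1-1/k)^k\le e^{-1}$.

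The main obstacle is Step 1, specifically the telescoping bound $f(\hat{S}_i\cup S^*)-f(\hat{S}_i)\le\sum_j\Delta_{\hat{S}_i}(o_j)$. It needs care when $S^*$ and $\hat{S}_i$ overlap and at the boundary $x\in B$ excluded in \eqref{eq:submod-condition}. The remaining steps are routine algebra.
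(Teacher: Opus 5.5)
Your proof is correct. The paper gives no proof of its own for this theorem and only cites Nemhauser, Wolsey, and Fisher; your argument (the one-step bound $f(S^*)\le f(\hat{S}_i)+k\,\Delta_{\hat{S}_i}(x_{i+1})$ from telescoping plus diminishing returns, unrolled to $g_k\le(1-1/k)^k g_0$, then $(1-1/k)^k\le e^{-1}$) is exactly the classical proof in that reference, and your treatment of the overlap case $o_j\in B$ and of the displayed limit as the uniform bound $(1-1/k)^k\le e^{-1}$ is right.
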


Based on this theorem, our algorithm starts with an empty set $S = \emptyset$ and iteratively selects the exemplar with maximum marginal gain:
\begin{equation}\label{eq:greedy-step}
     i^* = \mathop{\mathrm{argmax}}_{i \in \mathbb{N}_n \setminus S} \left[w(\mathcal{C}(S \cup \{i\}) \cap \mathcal{C}_{\text{query}}) - w(\mathcal{C}(S) \cap \mathcal{C}_{\text{query}})\right]
\end{equation}
adding $i^*$ to $S$ in each iteration.
This continues until $|S| = k$ or no exemplar provides positive marginal gain.

The greedy algorithm runs in polynomial time, with complexity depending on the efficiency of the component set operations.
Importantly, the $(1-1/e) \approx 0.63$ approximation guarantee ensures that our solution achieves at least 63\% of the optimal tiling ratio.
For monotone submodular maximization under a cardinality constraint $k$, a tighter bound of $1-(1-1/k)^k$ can be derived.
For typical values such as $k=3$, this yields~$1 - (2/3)^3 \approx 70.4\%$.

\subsection{Overall Framework}\label{sec:framework}
\Cref{fig:pipeline} shows the complete DST framework, which operates in two main stages: exemplar retrieval and code generation.

In the retrieval stage, given a natural language query $X_{\text{query}}$, we first extract its design specification components $\mathcal{C}_{\text{query}}$ using multi-granular sliding windows~(\cref{eq:components}).
We then apply our DST algorithm to select $k$ exemplars $S^* \subseteq \mathbb{N}_n$ that maximize the tiling ratio $f_\text{TR}(S^*, X_{\text{query}})$ over the database.
This greedy selection process leverages pre-computed component sets $\{\mathcal{C}_i\}_{i=1}^n$ for all database exemplars to efficiently identify the most complementary examples.

In the generation stage, the selected exemplars and their corresponding CAD code are organized into an ICL prompt, with each exemplar formatted as a specification-code pair.
The query's specification is then appended to this prompt.
A LLM with frozen parameters processes this prompt to generate the target CAD code, which is subsequently compiled and executed to render the final digital model.
The pseudocode of the algorithm is detailed in Algorithm~\ref{algo:dst}.

\begin{algorithm}[t]
    \KwIn{Query components $\mathcal{C}_{q}$, exemplar components $\{\mathcal{C}_i\}_{i=1}^n$, capacity $k$.}
    \KwOut{Selected exemplar set $S^* \subseteq \mathbb{N}_n$.}

    $S \leftarrow \emptyset$, $\mathcal{T} \leftarrow \emptyset$\tcp*{\small covered query components}

    \For{$t \leftarrow 1$ \KwTo $k$}{
        $s \leftarrow \arg\max_{j \in \mathbb{N}_n \setminus S}
        \left[
        w\bigl(\mathcal{T} \cup (\mathcal{C}_j \cap \mathcal{C}_{q})\bigr) - w(\mathcal{T})
        \right]$\tcp*{\small largest marginal tiling gain}

        \If{$w\bigl(\mathcal{T} \cup (\mathcal{C}_s \cap \mathcal{C}_{q})\bigr) = w(\mathcal{T})$}{
            \textbf{break}\tcp*{\small no new coverage}
        }

        $\mathcal{T} \leftarrow \mathcal{T} \cup (\mathcal{C}_s \cap \mathcal{C}_{q})$\;
        $S \leftarrow S \cup \{s\}$\;
    }

    \Return{$S^* = S$}\;

    \caption{Pseudocode of DST Algorithm}
    \label{algo:dst}
\end{algorithm}

\section{Experiment}
We conduct extensive experiments to demonstrate the effectiveness of the proposed method.

\subsection{Dataset and Baselines}
\textbf{Dataset.}
We adopt the widely used Text2CAD dataset~\cite{khan2024text2cad}, including 178K samples. After filtering entries with missing inputs and removing duplicates, it retains 151K valid samples.
We quantify the complexity of each instance using three complementary metrics:
\begin{itemize}
\item $\text{Len}(NL_i)$: the linguistic complexity, defined as the length of the natural language design specification.
\item $\text{Geom}(CAD_i)$: the geometric complexity~\cite{Alrashedycad}, defined as the total number of edges and faces in the CAD model.
\item $\text{Ops}(Code_i)$: the operational complexity, measured by the number of valid operations of CadQuery code.
\end{itemize}
To ensure equal contributions from each dimension and to mitigate biases arising from differing scales, all three metrics were standardized using min-max normalization.
Then a comprehensive complexity score, denoted as $\textbf{Complexity}_{i}$, is construct by aggregating these three normalized metrics as shown in \cref{Complexity}. 
Finally, samples are ranked by complexity and partitioned into three equal-sized tiers~({Easy}, {Middle}, and {Hard}). 
For the final experimental setup, 300 samples were randomly selected from each tier as the test set, while the remaining samples served as the ICL database.
\begin{equation}\label{Complexity}
\begin{aligned}
    \textbf{N}(x_i) &= \frac{x_i - \min(\mathbf{x})}{\max(\mathbf{x}) - \min(\mathbf{x})}\,, \\
    \text{Complexity}_i &= \textbf{N}(\text{Len}(NL_i) +  \textbf{N}(\text{Geom}(CAD_i)) \\ &+ \textbf{N}(\text{Ops}(Code_i)))\,.
\end{aligned}
\end{equation}

To evaluate LLMs on CAD code generation tasks of varying complexity, we rank samples using above intrinsic metrics, and divide them into three subsets: \textbf{Easy}, \textbf{Middle}, and \textbf{Hard}. 
From each subset, we randomly select 300 samples as the test set, while using the remaining samples as the in-context learning database.
As shown in \cref{tab:data_statistics}, from Easy to Hard, the three subsets exhibit clear increases across linguistic, geometric, and operational complexity, validating the effectiveness of our stratification strategy and providing a robust basis for fine-grained performance analysis.
\begin{table}[!h]
    \centering
    \small
    \setlength{\tabcolsep}{1.6pt} 
    \caption{Dataset statistics across complexity groups. For each group, the top row shows the \textbf{Mean} and the bottom row shows the \textbf{Median}.}
    \label{tab:data_statistics}
    \begin{tabular}{llccc|c} 
        \toprule
        \textbf{Group} & \textbf{stat.} & {$\text{Len}(NL_i)$} & {$\text{Geom}(CAD_i)$} & {$\text{Ops}(Code_i)$} & {Test Set} \\ 
        \midrule
        \multirow{2}{*}{\textbf{Easy}}   & Mean   & 147.67 & 17.33 & 5.67 & \multirow{2}{*}{300} \\
                                         & Median & 143.00 & 18.00 & 5.00 & \\
        \midrule
        \multirow{2}{*}{\textbf{Middle}} & Mean   & 195.00 & 27.67 & 9.67 & \multirow{2}{*}{300} \\
                                         & Median & 182.00 & 22.00 & 9.00 & \\
        \midrule
        \multirow{2}{*}{\textbf{Hard}}   & Mean   & 327.67 & 51.33 & 21.33 & \multirow{2}{*}{300} \\
                                         & Median & 301.50 & 50.00 & 19.00 & \\
        \bottomrule
    \end{tabular}
\end{table}

\textbf{Baselines.}
We compare against SOTA LLMs, including the open-source Qwen3~\cite{yang2025qwen3}, DeepSeek-V3~\cite{liu2024deepseek}, and the closed-source Claude 4.5-Haiku~\cite{anthropic2025claude}.
We benchmark the proposed DST strategy against four mainstream ICL exemplar selection baselines~\cite{bm25llm,xiao2025role,yali_tse25_cast}. The ICL exemplar selection task is formalized as choosing a size-$k$ subset $S$ to maximize a scoring function w.r.t.\ the query code $X_{\text{query}}$.
\begin{itemize}
    \item \textbf{Random Sampling}: Randomly selects $k$ exemplars from the candidate pool without leveraging query information.
    \item \textbf{LDSIM}: Uses Levenshtein (edit) distance to measure code similarity; selects exemplars by maximizing the negative sum of edit distances to the query. It is lightweight and widely used in ICL retrieval.
    \item \textbf{BM25}: A classic TF-IDF enhanced retrieval model that ranks candidates by textual relevance to the query, serving as a standard retrieval baseline.
    \item \textbf{Diversity (Clustering-based)}: Applies k-means clustering on CodeBERT embeddings to split candidates into $k$ clusters, then picks the centroid-closest sample from each cluster via cosine similarity for diverse exemplar selection.
\end{itemize}

\subsection{Evaluation Details}
\label{app:metric}

\textbf{Evaluation Metrics.}
Following prior CAD generation studies~\cite{guan2025cad,doris2025cad,he2025cad,niu2025cad,Alrashedycad}, we evaluate the generated CAD models using four widely-used metrics.
\begin{itemize}
    \item \textbf{Valid Syntax Rate (VSR)} evaluates code-level correctness by measuring the percentage of generated CadQuery scripts that can be executed as Python files without syntax or runtime errors.

    \item \textbf{Intersection over Union (IoU)} measures the volumetric overlap between the generated model $P$ and the ground truth model $Q$:
    \begin{equation}
        \operatorname{IoU}(P,Q) = \frac{|P \cap Q|}{|P \cup Q|}.
        \label{eq:iou}
    \end{equation}

    \item \textbf{Chamfer Distance (CD)} measures the bidirectional nearest-neighbor distance between point clouds sampled from the generated model and the ground truth model.

    \item \textbf{Edge Chamfer Distance (ECD)} is a structure-aware variant of CD computed on edge point clouds, making it more sensitive to geometric details such as corners, contours, and boundaries.
\end{itemize}

We utilize the Python-based CadQuery library\footnote{https://cadquery.readthedocs.io/} as the CAD compiler for rendering 3D objects from generated code.


\textbf{Handling of Invalid Outputs.}
To ensure a rigorous and impartial evaluation across all metrics (IoU, CD, and ECD ) we adopt a worst-case penalty for samples where model generation fails or errors occur. Specifically, these invalid outputs are represented as a singleton point set located at the coordinate origin $(0,0,0)$. Under this convention, the IoU is consistently recorded as 0, while CD and ECD values are calculated based on the distance between the origin point and the ground truth model. Since all target shapes undergo a normalized transformation process prior to evaluation, these distance-based penalties remain bounded and do not exert a disproportionate influence on the final performance results.


\textbf{Coordinate Alignment} \label{app:alignment}
Due to the inherent ambiguity in natural language descriptions regarding reference frames and absolute coordinates, discrepancies in relative position and orientation may exist between the generated solid model and the ground truth shape, denoted as $\textbf{P}$ and $\textbf{Q}$. Furthermore, variations in absolute scale across samples can lead to inconsistent weighting in distance-based metrics. To eliminate these shape-independent biases, we apply a series of rigid transformations (including translation, rotation, and scaling) to normalize and align $\textbf{P}$ with $\textbf{Q}$ before metric computation.


\paragraph{Normalization}
First, under the assumption of uniform mass distribution, we translate the centroid of the ground truth model $\textbf{Q}$ to the coordinate origin. For scaling, we utilize the \textbf{radius of gyration} \cite{scale_radius,doris2025cad} as a dimensional reference. Specifically, we apply uniform scaling such that the mean distance from the model's differential mass elements to the origin is normalized to a unit length.\ 
The detailed operations are as follows:
\begin{equation}
\begin{aligned}
    \text{Centroid}(\mathbf{Q}) & = \bar{\mathbf{q}} = \frac{1}{|\mathbf{Q}|} \int_{\mathbf{x} \in \mathbf{Q}} \mathbf{x} \, d\mathbf{x} \\
    \text{Translate}(\mathbf{Q}, -\bar{\mathbf{q}}) &= \{ \mathbf{x} - \bar{\mathbf{q}} \mid \mathbf{x} \in \mathbf{Q} \} \\
    R_g(\mathbf{Q}) &= \sqrt{\frac{\text{Tr}(\mathbf{I}_{\mathbf{Q}})}{2 \cdot \text{Vol}(\mathbf{Q})}} \\
    \text{Scale}(\mathbf{Q},R_g(\mathbf{Q})) &= \{ \frac{\mathbf{x}}{R_g(\mathbf{Q})} \mid \mathbf{x} \in \mathbf{Q} \}
\end{aligned}
\end{equation}
where $\mathbf{I}_{\mathbf{Q}}$ stands for the inertia matrix of $\textbf{Q}$ and $\text{Vol}(\mathbf{Q})$ stands for the volume of $\textbf{Q}$.


\begin{figure}[t]
    \centering
        \includegraphics[width=\linewidth]{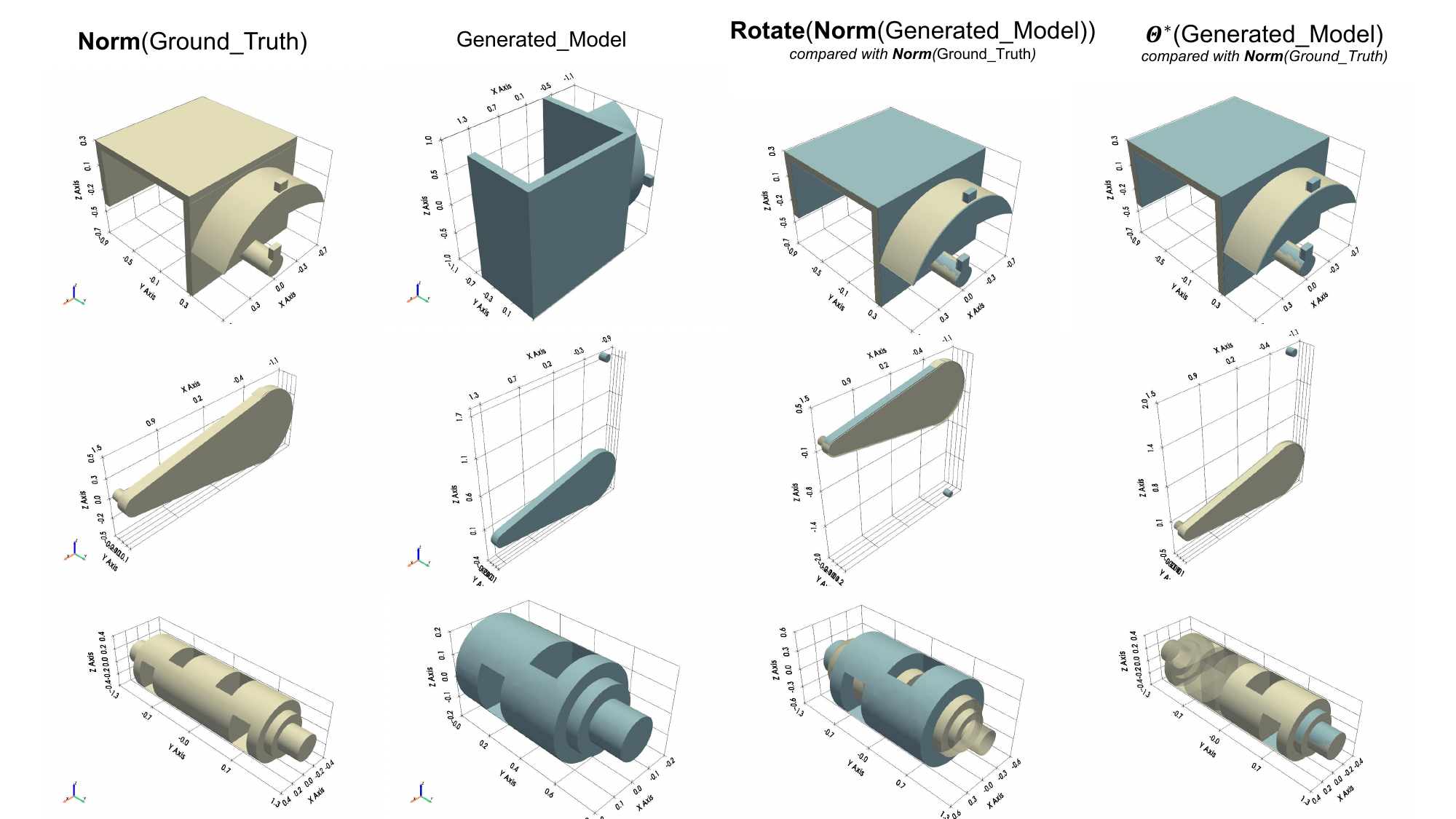}
    \caption{The normalized ground truth $\textbf{Norm}(\textbf{Q})$ and the generated models $\textbf{P}$ under different transformations: it is observed that when the generated model is substantially consistent with the ground truth, normalization and orthogonal rotation can effectively achieve the optimal alignment. However, in cases where the generated model is only partially correct, a more exhaustive search across a broader range of transformation combinations is necessary.}
    \label{fig:aligned_res}
\end{figure}

\begin{table*}[ht]
    \centering
    \small
    \setlength\tabcolsep{5.2pt}
    \caption{\label{tab:baselines}
    	Performance of LLMs with various ICL exemplar strategies (5-shot).
	}
    \resizebox{\textwidth}{!}{
    \begin{threeparttable}
    \begin{tabular}{l|lcccccccccccc}
       \toprule
       \multirow{2}{*}{} & \multirow{2}{*}{\textbf{Strategy}} & 
       \multicolumn{4}{c}{\textbf{Easy}} & \multicolumn{4}{c}{\textbf{Middle}} & \multicolumn{4}{c}{\textbf{Hard}} \\ 
       \cmidrule(lr){3-6}  \cmidrule(lr){7-10}  \cmidrule(lr){11-14}  
       & & \textbf{VSR}\textcolor{green}{$\bm{\uparrow}$} & \textbf{IoU}\textcolor{green}{$\bm{\uparrow}$} & \textbf{CD}\textcolor{red}{$\bm{\downarrow}$}  & \textbf{ECD}\textcolor{red}{$\bm{\downarrow}$} & \textbf{VSR}\textcolor{green}{$\bm{\uparrow}$} &\textbf{IoU}\textcolor{green}{$\bm{\uparrow}$} & \textbf{CD}\textcolor{red}{$\bm{\downarrow}$}  & \textbf{ECD}\textcolor{red}{$\bm{\downarrow}$} & \textbf{VSR}\textcolor{green}{$\bm{\uparrow}$} & \textbf{IoU}\textcolor{green}{$\bm{\uparrow}$} & \textbf{CD}\textcolor{red}{$\bm{\downarrow}$}  & \textbf{ECD}\textcolor{red}{$\bm{\downarrow}$} \\
      \midrule
       \multirow{7}{*}{\rotatebox{90}{Qwen3-30B-A3B}}
           & Zero-Shot & 0.5267 & 0.2609 & 0.6133 & 0.4877 & 0.2933 & 0.1274 & 0.6788 & 0.5354 & 0.1567 & 0.0738 & 0.6587 & 0.5593 \\
           & Random    & 0.7400 & 0.5056 & 0.3978 & 0.3481 & 0.6467 & 0.3942 & 0.4294 & 0.3550 & 0.4133 & 0.2228 & 0.4889 & 0.4393 \\  
           & LDSIM     & \textbf{0.8233} & \underline{0.5260} & \underline{0.3607} & \underline{0.3330} & \underline{0.6733} & 0.4200&	\underline{0.3964} & \underline{0.3391} & 0.5033 & 0.2660 & 0.4610 & 0.4068 \\
           & Diversity & 0.7867 & 0.5136 & 0.3925 & 0.3479 & 0.6467 & 0.3741 & 0.4435 & 0.3742 & 0.6067 & 0.3216 & 0.4067 & 0.3690 \\
           & BM25 & 0.6733 & 0.4982 & 0.4144 & 0.3670 & 0.6567 & \underline{0.4228} & 0.4050 & 0.3416 & \underline{0.6767} & \underline{0.3621} & \underline{0.3867} & \underline{0.3535} \\
           &\cellcolor{blue!10}\textbf{\mname{}~(Ours)}
                       & \cellcolor{blue!10}0.8067 & \cellcolor{blue!10}\textbf{0.6225} &  \cellcolor{blue!10}\textbf{0.3021} & \cellcolor{blue!10}\textbf{0.2980} & \cellcolor{blue!10}\textbf{0.8133} & \cellcolor{blue!10}\textbf{0.5613} & \cellcolor{blue!10}\textbf{0.2760} & \cellcolor{blue!10}\textbf{0.2511} & \cellcolor{blue!10}\textbf{0.6833} & \cellcolor{blue!10}\textbf{0.4260} & \cellcolor{blue!10}\textbf{0.3301} & \cellcolor{blue!10}\textbf{0.3044} \\
          &\cellcolor{blue!10}\emph{Improvement(\%)} 
                        & \cellcolor{blue!10}\textcolor{red!80!black}{-2.02} & \cellcolor{blue!10}\textcolor{green!60!black}{+18.35} & \cellcolor{blue!10}\textcolor{green!60!black}{+16.25} & \cellcolor{blue!10}\textcolor{green!60!black}{+10.51} & \cellcolor{blue!10}\textcolor{green!60!black}{+20.79} & \cellcolor{blue!10}\textcolor{green!60!black}{+33.06} & \cellcolor{blue!10}\textcolor{green!60!black}{+30.37} & \cellcolor{blue!10}\textcolor{green!60!black}{+25.95} & \cellcolor{blue!10}\textcolor{green!60!black}{+0.97} & \cellcolor{blue!10}\textcolor{green!60!black}{+17.65} & \cellcolor{blue!10}\textcolor{green!60!black}{+14.64} & \cellcolor{blue!10}\textcolor{green!60!black}{+13.89} \\ 
       \midrule
        \multirow{7}{*}{\rotatebox{90}{DeepSeek-V3}}
           & Zero-Shot & 0.4900 & 0.2524 & 0.6709 & 0.5184 & 0.5000 & 0.2309 & 0.5472 & 0.4338 & 0.3833 & 0.1769 & 0.5633 & 0.4742 \\
           & Random &    0.9467 & 0.7694 & 0.1041 & 0.1769 & 0.9100 & 0.6829 & 0.1502 & 0.1874 & 0.7900 & 0.4947 & 0.2516 & 0.2524 \\  
           & LDSIM  &    \underline{0.9667} & \underline{0.8003} & 0.0907 & \underline{0.1682} & \underline{0.9267} & 0.6985 & 0.1440 & \underline{0.1782} & \textbf{0.9067} & 0.5570 & \underline{0.1913} & \underline{0.2086} \\
           & Diversity & \textbf{0.9767} & 0.7959 & \underline{0.0818} & 0.1612 & 0.9067 & 0.6715 & 0.1566 & 0.1840 & 0.8867 & 0.5263 & 0.2139 & 0.2235\\
           & BM25 &  0.9167&	0.7687&0.1193&0.1932&0.9133& \underline{0.7051} & \underline{0.1439} &0.1849&  0.8700 & \underline{0.5586} &	0.2009	&0.2123  \\
           &\cellcolor{blue!10}\textbf{\mname{}~(Ours)} 
                    & \cellcolor{blue!10}\textbf{0.9767} & \cellcolor{blue!10}\textbf{0.8227} & \cellcolor{blue!10}\textbf{0.0778} & \cellcolor{blue!10}\textbf{0.1580} & \cellcolor{blue!10}\textbf{0.9367} & \cellcolor{blue!10}\textbf{0.7083} & \cellcolor{blue!10}\textbf{0.1229} & \cellcolor{blue!10}\textbf{0.1646}  & \cellcolor{blue!10}\underline{0.8900} & \cellcolor{blue!10}\textbf{0.5905} & \cellcolor{blue!10}\textbf{0.1807} & \cellcolor{blue!10}\textbf{0.1988} \\
           &\cellcolor{blue!10}\emph{Improvement(\%)} 
                        & \cellcolor{blue!10}\textcolor{green!60!black}{0.00} & \cellcolor{blue!10}\textcolor{green!60!black}{+2.80} & \cellcolor{blue!10}\textcolor{green!60!black}{+4.89} & \cellcolor{blue!10}\textcolor{green!60!black}{+1.99} & \cellcolor{blue!10}\textcolor{green!60!black}{+1.08} & \cellcolor{blue!10}\textcolor{green!60!black}{+0.45} & \cellcolor{blue!10}\textcolor{green!60!black}{+14.59} & \cellcolor{blue!10}\textcolor{green!60!black}{+7.63} & \cellcolor{blue!10}\textcolor{red!80!black}{-1.84} & \cellcolor{blue!10}\textcolor{green!60!black}{+5.71} & \cellcolor{blue!10}\textcolor{green!60!black}{+5.54} & \cellcolor{blue!10}\textcolor{green!60!black}{+4.70} \\  
       \midrule
       \multirow{7}{*}{\rotatebox{90}{Claude4.5-Haiku}}
           & Zero-Shot & 0.6867 & 	0.3681 &	0.4372 &	0.3828 &  0.7100 &	0.3930 &	0.3770 &	0.3161  &0.4667	&0.1973&0.5096&0.4401\\
           & Random    & \underline{0.9900} & 	0.6669 &	0.1084 &	0.1815 &  \underline{0.9867} &	0.6040 &	0.1432 &	0.1754  &  \underline{0.9467} &	\underline{0.5138} &	0.2215&	0.2266 \\  
           & LDSIM     & \textbf{0.9933} & 	\textbf{0.7090} &	0.1027 &	0.1772 &  \textbf{0.9900} &	0.6159 &	0.1369 &	0.1707 & 0.9333&	0.5097&	0.2197&	0.2247  \\
           & Diversity & 0.9867 & 	0.6475 &	0.1187 &	0.1862 &  0.9767 &	0.5876 &	0.1451 &	0.1770 & \textbf{0.9533}&	0.5033&	0.2103 &	\underline{0.2160}  \\
            & BM25 &   \underline{0.9900} &\underline{0.7085}& \underline{0.0992} & \underline{0.1758} &0.9833& \underline{0.6310} &\underline{0.1348}&\underline{0.1645}&0.9400& 0.5110 & \underline{0.2045} & 0.2171 \\
           &\cellcolor{blue!10}\textbf{\mname{}~(Ours)} 
                        &\cellcolor{blue!10}\textbf{0.9933} & 	\cellcolor{blue!10}0.7039 &	\cellcolor{blue!10}\textbf{0.0975} &	\cellcolor{blue!10}\textbf{0.1757} &  \cellcolor{blue!10}\textbf{0.9900} &\cellcolor{blue!10}\textbf{0.6590 }&\cellcolor{blue!10}\textbf{0.1156} &\cellcolor{blue!10}\textbf{0.1554} & \cellcolor{blue!10}0.9433& \cellcolor{blue!10}\textbf{0.5253}& \cellcolor{blue!10}\textbf{0.2037 }&\cellcolor{blue!10}\textbf{0.2108}\\
           &\cellcolor{blue!10}\emph{Improvement(\%)} 
                        &\cellcolor{blue!10}\textcolor{green!60!black}{0.00} & 	\cellcolor{blue!10}\textcolor{red!80!black}{-0.72}&	\cellcolor{blue!10}\textcolor{green!60!black}{+1.71} &	\cellcolor{blue!10}\textcolor{green!60!black}{+0.06} &  \cellcolor{blue!10}\textcolor{green!60!black}{0.00} &	\cellcolor{blue!10}\textcolor{green!60!black}{+4.44} &	\cellcolor{blue!10}\textcolor{green!60!black}{+14.24} &	\cellcolor{blue!10}\textcolor{green!60!black}{+5.53} & \cellcolor{blue!10}\textcolor{red!80!black}{-1.05} & \cellcolor{blue!10}\textcolor{green!60!black}{+2.24} & \cellcolor{blue!10}\textcolor{green!60!black}{+0.39} & \cellcolor{blue!10}\textcolor{green!60!black}{+2.41} \\
           \bottomrule

    \end{tabular}
    \begin{tablenotes}[flushleft]   
		\item[1] \footnotesize \textcolor{green}{$\bm{\uparrow}$}: Higher value indicates better performance (VSR/IoU); \quad \textcolor{red}{$\bm{\downarrow}$}: Lower value indicates better performance (CD/ECD).
        \item[2] \footnotesize \textbf{Bold} values denote the best performance, and \underline{underlined} values denote the second-best performance.
        \item[3] \footnotesize \emph{Improvement(\%)} represents the performance improvement percentage of \mname{} over the the best result excluding \mname{} itself.
	\end{tablenotes}
    \end{threeparttable}
    }
\end{table*}

\paragraph{Optimal alignment}
In practice, simply applying an identical normalization to $\textbf{P}$ does not guarantee optimal alignment, although it can achieve an optimal result over IoU \cite{doris2025cad}. Thus, we consider a combinatorial space of operations for $\textbf{P}$:
\begin{itemize}
    \item \textbf{Translation \& Scaling:}
    We consider two reference choices for normalization: self-normalization based on $\mathbf{P}$ and direct alignment based on $\mathbf{Q}$.
    Specifically, we generate four normalized variants by taking the Cartesian product of translation and scaling choices:
    \[
    \mathcal{N}_{\mathbf{x}, r}(\mathbf{P})
    =
    \operatorname{Scale}\bigl(
        \operatorname{Translate}(\mathbf{P}, \mathbf{x}), r
    \bigr)\,,
    \]
    where $ \mathbf{x} \in \{-\bar{\mathbf{p}}, -\bar{\mathbf{q}}\}$, $r \in \{R_g(\mathbf{P}), R_g(\mathbf{Q})\}$

    \item \textbf{Rotation:}
    We consider only orthogonal rotations along the $\mathbf{XYZ}$ axes, with angles in $\{0^\circ, 90^\circ, 180^\circ, 270^\circ\}$, yielding 24 possible orientations. We exhaustively search this discrete rotation set to compare models under their best orthogonal alignment.
\end{itemize}

By evaluating these $4 \times 24 = 96$ candidate combinations, the optimal transformation $\Theta^*$ is determined by:
\begin{equation}
\Theta^* = \arg\max_{\Theta \in \Omega} \text{Metric}( \Theta(\textbf{P}), \textbf{Norm}(\textbf{Q}))
\end{equation}
where $\Omega$ denotes the set of 96 transformation candidates, the transformation that yields the highest similarity score, which considers both volumetric alignment (IoU) and surface fidelity (CD), is retained as the optimal alignment for evaluation. Example of different transformation is shown in \cref{fig:aligned_res}.




%

\subsection{Quantitative Analysis}
\begin{researchquestion}
How does DST compare to baseline strategies across different LLMs?
\end{researchquestion}
As shown in \cref{tab:baselines}, DST achieves the best overall performance across Qwen3-30B-A3B, DeepSeek-V3, and Claude 4.5-Haiku.
All ICL methods outperform the zero-shot baseline, confirming that exemplars provide useful CAD-specific knowledge.
However, random sampling is unstable, similarity-based methods such as LDSIM and BM25 may retrieve redundant exemplars, and the diversity baseline does not explicitly optimize query requirement coverage.
By contrast, DST selects complementary exemplars that maximize component-level coverage, leading to more consistent gains across metrics.

The improvement is especially clear on Qwen3-30B-A3B.
This is because Qwen3-30B-A3B has relatively weaker CAD generation capability, leaving more room for exemplar-based guidance.
In this setting, insufficient or redundant demonstrations can easily lead to missing geometric components or incorrect modeling operations, while DST compensates for this limitation by providing broader requirement-level knowledge coverage.
For stronger models such as DeepSeek-V3 and Claude 4.5-Haiku, VSR is already close to saturation, but DST still improves geometric quality, particularly in IoU, CD, and ECD.
Overall, these results demonstrate that knowledge sufficiency is a more effective exemplar selection objective than similarity, randomness, or point-wise diversity.

\begin{researchquestion}
How does task difficulty affect DST results?
\end{researchquestion}
As shown in \cref{tab:baselines}, DST improves performance across different difficulty levels, but its benefits are most pronounced on \textbf{Middle} tasks.
For Qwen3-30B-A3B, DST achieves the largest gains on Middle tasks, improving VSR and IoU by 20.79\% and 33.06\%, while reducing CD and ECD by 30.37\% and 25.95\%, respectively.
A similar trend can be observed on stronger models. DST reduces CD by 14.59\% on DeepSeek-V3 and 14.24\% on Claude 4.5-Haiku for Middle tasks, both larger than their improvements on Easy and Hard tasks.

This suggests that the benefit of DST is not simply proportional to task difficulty.
Easy tasks leave limited room for improvement, as their specifications are relatively simple and can already be handled well by LLMs or baseline retrieval methods.
Hard tasks, although still improved by DST, often involve complex spatial relations and many modeling operations, making it difficult for LLMs to fully absorb the retrieved knowledge.
In contrast, Middle tasks are complex enough to require external demonstrations but still tractable for LLMs to effectively reuse the complementary knowledge provided by DST.
Thus, DST is most effective when insufficient requirement coverage is the main bottleneck.

\begin{researchquestion}
How does model performance scale with the number of in-context exemplars?
\end{researchquestion}
As shown in \cref{fig:perto}, increasing the number of in-context exemplars generally improves performance for all selection strategies, confirming that additional demonstrations provide useful CAD-specific guidance.
However, the improvement is not linear: performance gains become smaller when the number of shots exceeds two, suggesting that task-relevant knowledge gradually approaches saturation.
Compared with baselines, DST reaches strong performance with fewer exemplars and approaches the Pareto front more efficiently.
For example, with only 3 shots, DST achieves an IoU comparable to or better than baselines using 5 shots.
This indicates that DST makes more effective use of the limited context budget by selecting complementary exemplars rather than simply adding more similar ones.
In contrast, similarity-based methods may introduce redundant demonstrations, causing additional shots to contribute limited new information.

\begin{researchquestion}
How does tiling ratio relate to performance?
\end{researchquestion}
As shown in \cref{fig:correlation}, DST consistently achieves a higher tiling ratio than all baselines as the number of shots increases.
Its tiling ratio steadily grows from approximately 0.3 to 0.5 across 1--5 shots, while baseline methods exhibit much slower growth.
This indicates that DST can more effectively expand the coverage of query-specific design components, rather than merely adding additional exemplars.

More importantly, the tiling ratio shows a clear positive correlation with generation performance.
A higher tiling ratio generally corresponds to higher VSR and IoU, as well as lower CD and ECD.
This validates tiling ratio as an effective surrogate for knowledge sufficiency: exemplars that cover more query components provide more useful guidance for generating syntactically valid and geometrically accurate CAD code.
The trend also explains why DST benefits more from additional shots than similarity-based methods, since each selected exemplar tends to contribute complementary information instead of redundant content.
Overall, these results confirm that improving component-level coverage is a key factor behind the performance gains of DST.

\begin{figure*}[!t]
    \centering
    \includegraphics[width=0.98\linewidth]{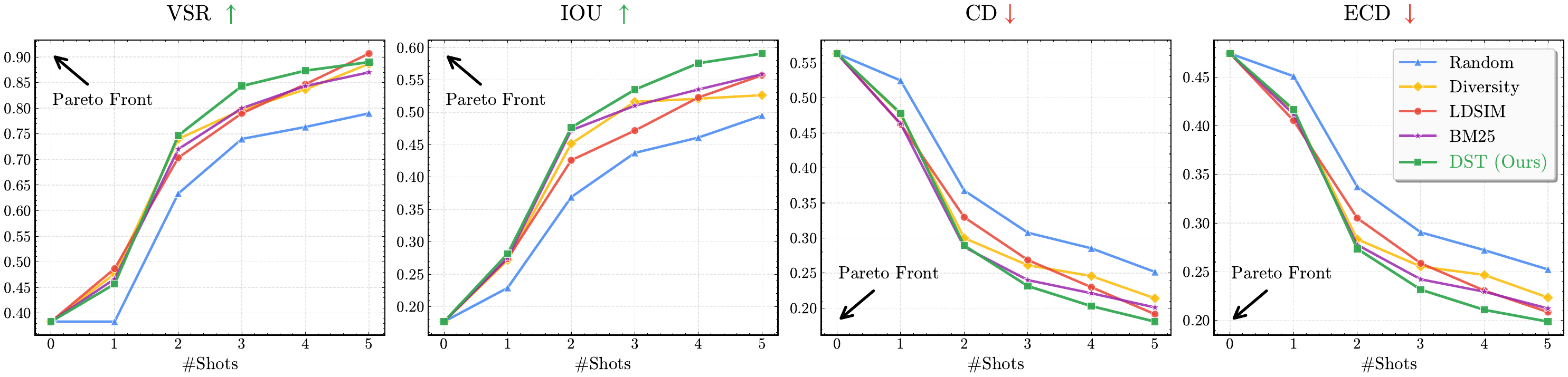}
    \caption{Performance trends of ICL strategies with increasing shot numbers based on DeepSeek-V3 on Hard tasks (Pareto front analysis)}
    \label{fig:perto}
\end{figure*}

\begin{figure*}[!t]
    \centering
    \includegraphics[width=0.98\linewidth]{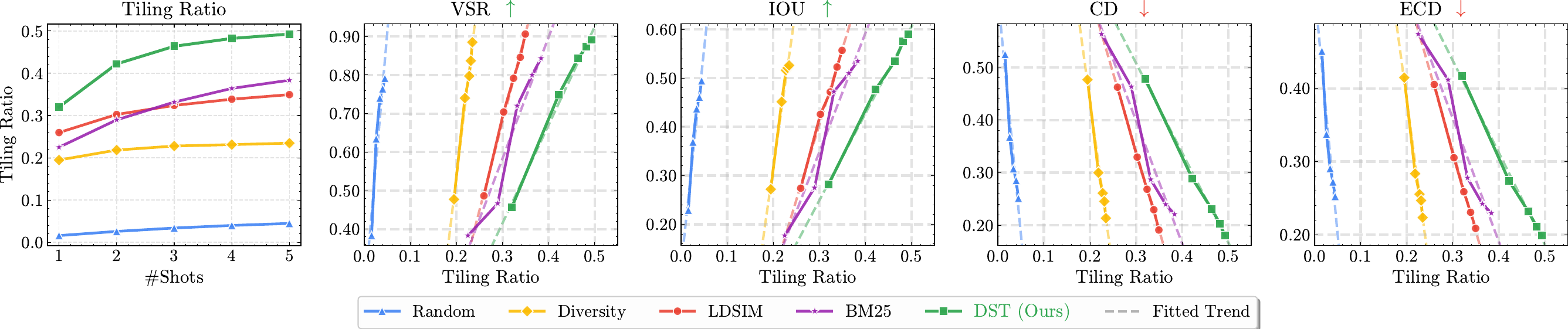}
    \caption{Correlation between Tiling Ratio, Shot Count, and Performance on DeepSeek-V3 on Hard tasks across different ICL strategies}
    \label{fig:correlation}
\end{figure*}

\begin{table*}[t]
    \centering
    \small
    \setlength\tabcolsep{5.2pt}
    \caption{\label{tab:softmatch}
    	Analysis of Soft and Hard Matching on Qwen3-30B-A3B (5-shot).
	}
    \resizebox{\textwidth}{!}{
    \begin{threeparttable}
    \begin{tabular}{l|cccccccccccc}
       \toprule
       \multirow{2}{*}{ {Strategy}} & 
       \multicolumn{4}{c}{ {Easy}} & \multicolumn{4}{c}{ {Middle}} & \multicolumn{4}{c}{ {Hard}} \\ 
       \cmidrule(lr){2-5}  \cmidrule(lr){6-9}  \cmidrule(lr){10-13}  
       &  {VSR}\textcolor{green}{$\bm{\uparrow}$} &  {IoU}\textcolor{green}{$\bm{\uparrow}$} &  {CD}\textcolor{red}{$\bm{\downarrow}$}  &  {ECD}\textcolor{red}{$\bm{\downarrow}$} &  {VSR}\textcolor{green}{$\bm{\uparrow}$} & {IoU}\textcolor{green}{$\bm{\uparrow}$} &  {CD}\textcolor{red}{$\bm{\downarrow}$}  &  {ECD}\textcolor{red}{$\bm{\downarrow}$} &  {VSR}\textcolor{green}{$\bm{\uparrow}$} &  {IoU}\textcolor{green}{$\bm{\uparrow}$} &  {CD}\textcolor{red}{$\bm{\downarrow}$}  &  {ECD}\textcolor{red}{$\bm{\downarrow}$} \\
        \midrule
        CodeBERT & \textbf{0.8567} & 0.6217 & \textbf{0.2626} & \textbf{0.2683} & 0.7633 & 0.4983 & 0.3090 & 0.2826 & 0.6700 & 0.3994 & 0.3437 & 0.3168 \\
        DST (Soft Matching) & \textbf{0.8567} & \textbf{0.6234} & 0.2659 & 0.2725 & 0.7167 &  0.4844 & 0.3453 & 0.3087 & 0.6433 & 0.4065 & 0.3628 & 0.3297 \\
        DST (Hard Matching)& 0.8067 & 0.6225 &  0.3021 & 0.2980 & \textbf{0.8133} & \textbf{ 0.5613} & \textbf{0.2760 }& \textbf{0.2511} & \textbf{0.6833} & \textbf{0.4260} & \textbf{0.3301} & \textbf{0.3044 } \\
        \bottomrule
        \end{tabular}
        \begin{tablenotes}[flushleft]
        \item[1] \footnotesize \textbf{Soft Matching}: two phrases are regarded as matched if the cosine similarity between their CodeBERT embeddings exceeds $0.98$. 
        \item[2] \textbf{Hard Matching}: strict lexical exact matching.
        \end{tablenotes}
        \end{threeparttable}
    }
\end{table*}

\begin{researchquestion}
Whether synonymous descriptions limit the performance of hard matching?
\end{researchquestion}
Hard matching strictly relies on literal string consistency, which is inherently constrained by synonymous paraphrasing in open-domain texts. In general scenarios, diverse synonymous descriptions prevent hard matching from identifying semantically equivalent phrases, thereby deteriorating its overall performance. Thus, We extend the method to Soft matching to evaluated the performance, which is designed to relieve this limitation by capturing implicit synonyms through embedding similarity (two phrases are regarded as matched if the cosine similarity between their CodeBERT embeddings exceeds 0.98). 
Nevertheless, as illustrated in Table~\ref{tab:softmatch}, such limitation does not dominate in this domain-specific dataset. The corpus consists of standardized technical terms with negligible synonyms and few alternative descriptions, eliminating the primary drawback of hard matching. Without the interference of synonymous variation, hard matching achieves precise and credible string matching. By comparison, soft matching loses its inherent advantage for synonym mining. The redundant embedding calculation introduces subtle numerical noise, which negatively affects the matching judgement. The experimental results verify that hard matching is limited by synonyms in open-domain data, while such constraints vanish on datasets with standardized terminology. In this case, soft matching cannot bring extra benefits and even degrades performance.

\subsection{Qualitative Analysis~(Case Study)}

\begin{researchquestion}
How do exemplars selected by DST differ from baseline strategies?
\end{researchquestion}
\cref{fig:case1} illustrates exemplar selection by DST and the best baseline LDSIM for CAD design specification.
For the first specification with two key features (``cylinder with hole'' and ``protruding rod''), LDSIM captures only ``cylinder with hole'' while missing ``protruding rod.''
For the second specification with three key features (``circular hole,'' ``oval shape,'' and ``tray''), LDSIM captures ``hole'' and ``oval'' but misses ``tray.''
In contrast, DST selects exemplars that tiles all features in both cases.
This demonstrates that the similarity top-$k$ approach~(LDSIM) retrieves homogeneous exemplars with incomplete coverage, while DST selects complementary exemplars that provide sufficient compositional knowledge.
By reducing overlap among selected exemplars, DST provides more diverse and complementary guidance to help the model better capture the compositional structure.


\begin{researchquestion}
How does generation quality change with increasing exemplar shots?
\end{researchquestion}
\Cref{fig:case2} compares generated CAD objects from DST and the best baseline LDSIM across four tasks with increasing exemplar shots.
DST consistently generates models closely matching ground truth in shape, topology, and features.
In contrast, LDSIM outputs exhibit structural flaws: fragmentation, missing holes, and extraneous geometry.
As exemplar shot increases, DST-generated objects outputs progressively converge toward the ground truth, while LDSIM shows instability: in three left three cases, outputs degrade with additional exemplars due to redundant selections.
This occurs because LDSIM's similarity-based selection retrieves redundant exemplars that fail to provide novel guidance, while DST consistently selects complementary exemplars that incrementally expand knowledge coverage and improve generation quality.

\begin{figure*}[!t]
    \centering
    \includegraphics[width=\linewidth]{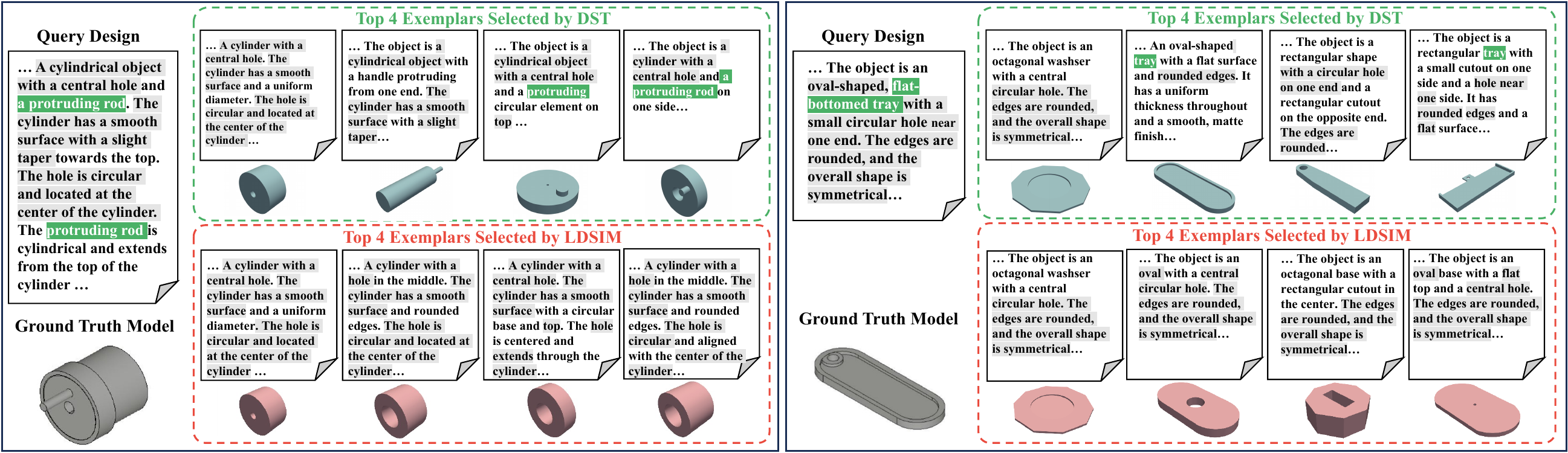}
    \caption{Exemplar Selection Quality Comparison Between DST and LDSIM for CAD Code Generation}
    \label{fig:case1}
\end{figure*}

\begin{figure*}[!t]
    \centering
    \includegraphics[width=\linewidth]{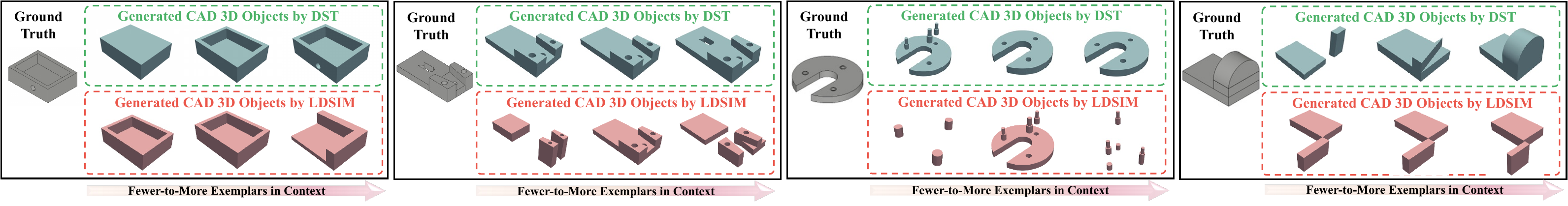}
    \caption{Generated CAD Object Quality Comparison of DST vs. LDSIM with Fewer-to-More In-Context Exemplars}
    \label{fig:case2}
\end{figure*}




\section{Related Works}\label{sec:rel_work}
\textbf{CAD Code Generation.}
With the advancement of generative AI, CAD code generation has emerged as a hot research field~\cite{niu2025cad,guan2025cad,alrashedy2025generating,niu2025cme,preintner2025evocad,willis2021fusion}. Its core goal is to generate precise, editable 3D models directly from design specifications, outperforming traditional 3D reconstruction methods by enabling parametric control and design flexibility~\cite{koch2019abc}. Existing approaches employ diverse modalities and technical designs:
1)~\textbf{Img2CAD}: Img2CAD takes 2D images as input and leverages vision large models to directly map visual content into CAD modeling instructions and executable code~\cite{wan2024cad,ahmed20253dfromllm,yuan2024openecad,daareyni2025generative,zhao2024chatcad+}. For instance, Xu et al.,~\cite{xu2024cad}, Alam et al.,~\cite{alam2024gencad} used vision-language models map image inputs to CAD commands; Wang et al.,~\cite{wang2025cad} introduced flexible code formats to enhance modeling expressiveness. 
2)~\textbf{PointClouds2CAD}: PointClouds2CAD adopts 3D point clouds as input, converting them into Python-based CAD code that can reconstruct the 3D CAD model after execution~\cite{dai2026meshcoder,ma2024draw}. For isntance, CAD-Recode translates a point cloud into Python codethat, when executed, reconstructs the CAD model~\cite{rukhovich2025cad}. 
3)~\textbf{Text2CAD}:
Compared to Img2CAD, Text2CAD has emerged as a prominent recent trend, owing to the fact that formal visual input is not only more challenging to acquire but also less generalizable than natural language design intent~\cite{govindarajan2025cadmium,zhou2026cad,liao2025automated}.
For instance, Khan et al.,~\cite{khan2024text2cad} used encoder-decoder architectures to translate textual descriptions into command sequences.
Guan et al.,~\cite{guan2025cad} leveraged CadQuery to improve the precision of LLM-based CAD code generation by integrating chain-of-thought reasoning and geometric reward mechanisms. 

\textbf{LLMs for Code Generation.}
With the rapid advancement of deep learning technologies, the field of intelligent code understanding~\cite{ma2023capturing,du2023pre,chai2022pyramid,du2025capturing} and code generation~\cite{xin2025enhancing,sun2026aces,du2024joint,du2023beyond} has witnessed explosive development.
Together, these advances have significantly promoted the automation of software engineering and expanded the application scope of large language models in programming-related scenarios, including code generation, automated testing, interactive development, and broader software engineering workflows~\cite{fan2023large, mastropaolo2023robustness, cassano2023multiple, schafer2024empirical,bouzenia2025repairagent,parasaram2025fact,yang2025revisiting}.
Traditional code generation tasks primarily tackled by seq-to-seq models optimized via supervised fine-tuning~\cite{svyatkovskiy2020intellicode,le2022coderl,li2022competition,zheng2023outline} suffered from limited generalization and high computational overhead for parameter updates as the model parameters increased.
The rise of LLMs has reshaped code generation paradigms, expanding real-world applications from function-level code generation~\cite{fakhoury2024llm} to full-stack project scaffolding~\cite{zhang2024codeagent}. 
The research efforts of code generation focusing on training-free strategies to boost performance, such as chain-of-thought~\cite{li2025structured,yang2024chain}, code ranking~\cite{zhang2023coder}, in-context learning~\cite{li2025large}, combining compiler testing~\cite{fakhoury2024llm}, etc. 
These strategies have pushed the performance of advanced code LLMs (e.g., CodeLlama~\cite{roziere2023code}, DeepSeek-Coder~\cite{guo2024deepseek}) well beyond that of traditional methods.

\section{Conclusions and Future Directions}\label{sec:conclusion}
This work addresses the underperformance of LLMs in domain-specific CAD code generation by introducing knowledge sufficiency as a principled objective for ICL exemplar selection. To optimize the objective, we propose DST, a submodular framework that quantifies component coverage via a tiling ratio, and leverages greedy optimization with a $(1-1/e)$ approximation guarantee. Extensive experiments demonstrate that DST consistently outperforms existing strategies across multiple LLMs and task difficulties, validating knowledge sufficiency as an effective objective. 
For future work, we aim to extend DST to other compositional generation tasks, enrich the tiling framework with learned semantic similarity, adapt it to hierarchical program structures, and further investigate how LLMs integrate compositional knowledge to develop structured spatial perception.

\clearpage

\bibliographystyle{ieeetr}
\bibliography{cad}

\clearpage

\twocolumn[
  \begin{@twocolumnfalse}
    \begin{center}
      \fontsize{18}{20}\selectfont\bfseries
          Supplementary Material \\ Design-Specification Tiling for ICL-based CAD Code Generation
      \vspace{1em}
    \end{center}
  \end{@twocolumnfalse}
]

\setcounter{page}{1}

\section{Inference-Time Prompt Template}\label{app:prompt}
At inference time, selected exemplars and their corresponding CAD code are structured into an in-context learning prompt for the language model. The prompt comprises three hierarchical components:

\begin{enumerate}
    \item \textbf{System prompt:} Establishes the model's role as a CAD code generation expert.
    \item \textbf{Task instruction:} Specifies output format constraints and generation requirements.
    \item \textbf{Demonstration sequence:} Presents $k$ exemplar pairs $\{(X_i, Y_i)\}_{i=1}^k$ ordered by relevance to the query.
\end{enumerate}

The query specification $X_{\text{query}}$ is appended as the final input, prompting the model to generate the corresponding CAD code $\hat{Y}$. The complete prompt structure follows:
\begin{tcolorbox}[
    colback=gray!5!white,
    colframe=gray!75!black,
    title=ICL Prompt Template,
    fonttitle=\bfseries
]

\small
\textbf{System Prompt:} \\
\texttt{You are an expert CAD engineer proficient in Python and the CadQuery library. Your task is to generate precise, executable CadQuery scripts according to given natural language descriptions.}

\vspace{1em}
\textbf{Instruction:} \\
\texttt{Please create a CadQuery Python code which can generate a model based on the instruction and description. \\
The final CadQuery code MUST BE put in '''python code''' with ONLY the executable code inside the python box, nothing else. \\
Relevant examples will be provided in sequence according to their similarity to the final query, and these examples may be helpful for answer generation. \\
Please don't use the non-existent '.scale()' method on Workplane objects.}

\vspace{1em}
\texttt{\#Examples Begin:}

\texttt{Description:} \{Design specification of the first exemplar\}

\texttt{'''} \{Code of the first example\} \texttt{'''} 

\vspace{1em}
$\cdots \cdots$
\vspace{1em}

\texttt{\#Examples End}

\vspace{1em}
\textbf{User Input:} \\
\texttt{Description:} \{Design specification of the current query\}
\end{tcolorbox}






\section{Baseline ICL Strategies}\label{app:baseline}
We evaluate several baseline in-context learning (ICL) exemplar selection strategies to benchmark our proposed DST selection strategy. These baselines include similarity-based approaches (e.g., LDSIM and BM25) as well as diversity-oriented techniques (e.g., Diversity).
The primary challenge in ICL is selecting an effective context for a given input source code $X_{\text{query}}$.
This problem can be formalized as selecting a subset of indices $S$ that satisfies the context size constraint $|S| = k$:
\begin{equation} \label{generaleq}
    S^* = \mathop{\mathrm{argmax}}_{S \subseteq \mathbb{N}_n, |S|=k} \mathrm{Score}(S,X_{\text{query}})\,.
\end{equation}
Each exemplar selection method can be viewed as a specific instantiation of the general formulation in Eq.~\ref{generaleq}.

\paragraph{(1) Random Sampling (Random)}
$k$ exemplars are uniformly sampled from the candidate pool, independent of the test input $X_{\text{test}}$:
\begin{equation}
S^* \sim \text{Uniform}(\mathbb{N}_n), \quad |S^*| = k\,.
\end{equation}

\paragraph{(2) Retrieval by Levenshtein Distance (LDSIM)}
Levenshtein distance (LD), also known as edit distance, is a fundamental string metric that quantifies similarity by counting the minimum number of single-character edit operations required to transform one sequence into another.
Due to its simplicity and low computational overhead, LD has been widely adopted in prior ICL literature for exemplar retrieval and matching tasks~\cite{shen2025magesql,yali_tse25_cast,liu2025towards}.
The selection score is defined as the negative sum of Levenshtein distances between each exemplar and the test input:
\begin{equation}
S^* = \mathop{\mathrm{argmax}}_{S \subseteq \mathbb{N}_n, |S|=k} -\sum_{i \in S} \text{LD}(X_i, X_{\text{test}})\,.
\end{equation}

\paragraph{(3) Retrieval by Best Matching 25 (BM25)}
BM25 is a bag-of-words retrieval function that is widely used in ICL~\cite{bm25llm} to rank textual documents by their relevance to a given query.
As a successor to the classic term frequency--inverse document frequency (TF--IDF) algorithm, BM25 addresses key limitations of TF--IDF, such as term frequency saturation, and introduces tunable parameters to adapt to diverse retrieval scenarios.
Consequently, BM25 has become a foundational baseline for retrieval tasks.
The overall score is computed as the sum of BM25 relevance scores between each candidate and the test input:
\begin{equation}
S^* = \arg\max_{S \subseteq \mathbb{N}_n,\, |S| = k} \sum_{i \in S} \text{BM25}(X_i, X_{\text{test}})\,.
\end{equation}

\paragraph{(4) Retrieval by Diversity with Clustering (Diversity)}
Following prior work~\cite{xiao2025role}, we implement a purely diversity-based strategy that partitions the training set into $k$ clusters using the k-means algorithm and selects the sample closest to each cluster centroid.
Cosine similarity computed over CodeBERT embeddings~\cite{codebert} is used as the distance metric:
\begin{equation}
\mathbb{N}_n \xrightarrow[]{\text{k-means}(f_{\text{CB}}(X_i))} \{C_1, \dots, C_k\}\,,
\end{equation}
where $C_j$ denotes the index set of the $j$-th cluster.
\begin{equation}
S^*_{[i]} = \arg\max_{j \in C_i} \cos\left(f_{\text{CB}}(X_j), f_{\text{CB}}(X_{\text{test}})\right)\,,
\end{equation}
for $i \in \{1,\dots, k\}$, and the final exemplar set is given by $S^* = S^*_{[1:k]}$.

\section{Additional Experiments}
\begin{figure}[!h]
    \centering

        \centering
        \includegraphics[width=\linewidth]{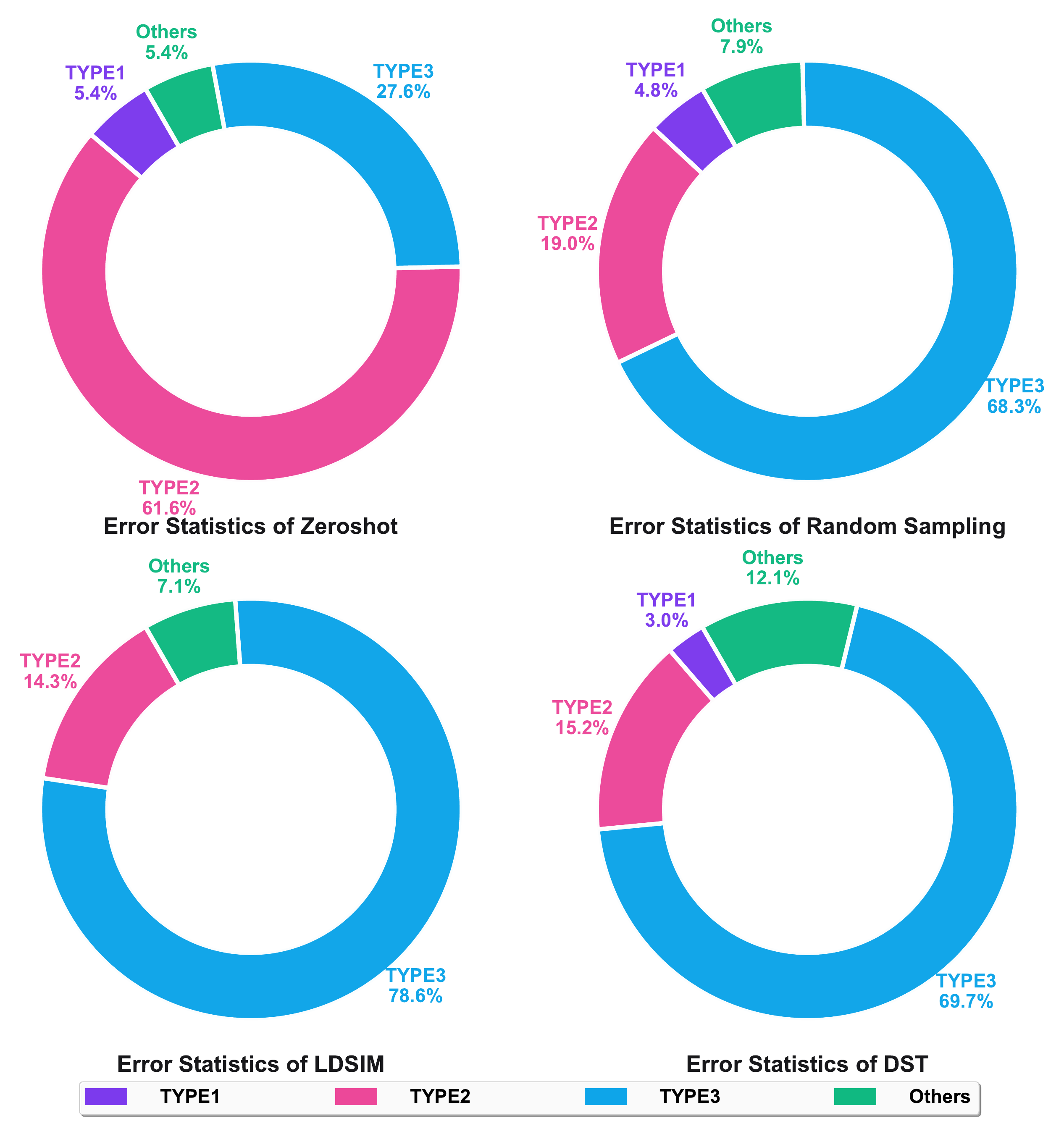}
    \caption{Failures distribution of DeepSeek-V3: The table (bottom) shows the count of each error types, while the pie charts (top) visualize the percentage of Error Type under Zeroshot, Random, LDSIM, and DST settings.}
    \label{fig:combined_fail_analysis}
\end{figure}

\subsection{Failure Analysis}
Throughout the pipeline \{source code extraction, compilation and execution, 3D rendering\}, the encountered failures can be systematically classified into the following three categories:
\begin{itemize}
    \item  \textbf{Type I: Response and Syntax Errors.} These failures primarily occur during the initial code extraction phase. Such errors typically stem from inconsistent input/output formats or malformed code blocks that fail to adhere to the required syntax.
    \item \textbf{Type II: API Misuse and Semantic Errors.} This category involves the invocation of non-existent attributes or the passing of invalid arguments to CadQuery methods. These issues reflect the model’s deficiency in domain-specific knowledge regarding the precise implementation details of third-party libraries.
    \item \textbf{Type III: Geometric Constraint Violations.} These are logical failures where the generated code violates fundamental geometric rules, leading to runtime crashes. Representative cases include: No pending wires present (i.e., attempting to create a solid from a non-contiguous path), Null-Entity Operations (i.e., extruding an empty point set), Non-coplanar operations (i.e., performing union or subtraction on workplanes that do not share the same plane).
\end{itemize}

\begin{table}[!t]
    \centering
    \small
    \setlength{\tabcolsep}{3pt} 
    \caption{\label{tab:fail_analysis}Failures distribution of DeepSeek-V3: The table shows the count of each error types under Zeroshot, Random, LDSIM, and DST settings(5-shot).}
    
    \begin{tabular}{lcccc}
    \toprule
    \textbf{Failure Type} & \textbf{Zeroshot} & \textbf{Random} & \textbf{LDSIM} & \textbf{DST} \\
    \midrule
    \textbf{TYPE I: Syntax Error}    & 10  & 3   & 0   & 1   \\
    \textbf{TYPE II: API Misuse}   & 114 & 12  & 4   & 5   \\
    \textbf{TYPE III: Geometric Error} & 51  & 43  & 22  & 23  \\
    \textbf{Others}        & 10  & 5   & 2   & 4   \\
    \midrule
    \textbf{Total Counts}         & 185 & 63  & 28  & 33  \\
    \bottomrule
    \end{tabular}
\end{table}


Taking DeepSeek-V3 as a representative case, the distribution of these error types across different input configurations is illustrated in \cref{fig:combined_fail_analysis}. 

Through a comparative analysis of the error distributions across the four experimental groups, two key insights emerge regarding the relationship between in-context learning and failure distribution: first the significant reduction in \textbf{Type II }(API Misuse) errors in the three few-shots groups (compared to the Zeroshot baseline) suggests that providing sufficient exemplars can largely rectify the model's misuse of third-party libraries.
Despite the overall improvement in success rates provided by LDSIM and DST, these methods remain insufficient for resolving \textbf{Type III} (Geometric Error). This suggests that Type III errors typically arise from intricate combinatorial logic and spatial reasoning that cannot be easily acquired through partially similar examples.


\subsection{More Case Study}
As illustrated in \cref{fig:morecase}, more case study was conducted among different strategies using three representative LLMs: Qwen3-A3B, DeepSeek-V3, and Claude4.5-Haiku. 

The top row presents the ground truth, with the rows below corresponding to the 3D CAD object generation outputs of the respective models under various strategies. As demonstrated in \cref{fig:morecase}, among all in-context learning (ICL) strategies evaluated, the Dialogue State Tracking (DST) approach consistently outperforms other methods across the three representative LLMs.

\begin{figure*}[!t]
    \centering
    \includegraphics[width=0.9\linewidth]{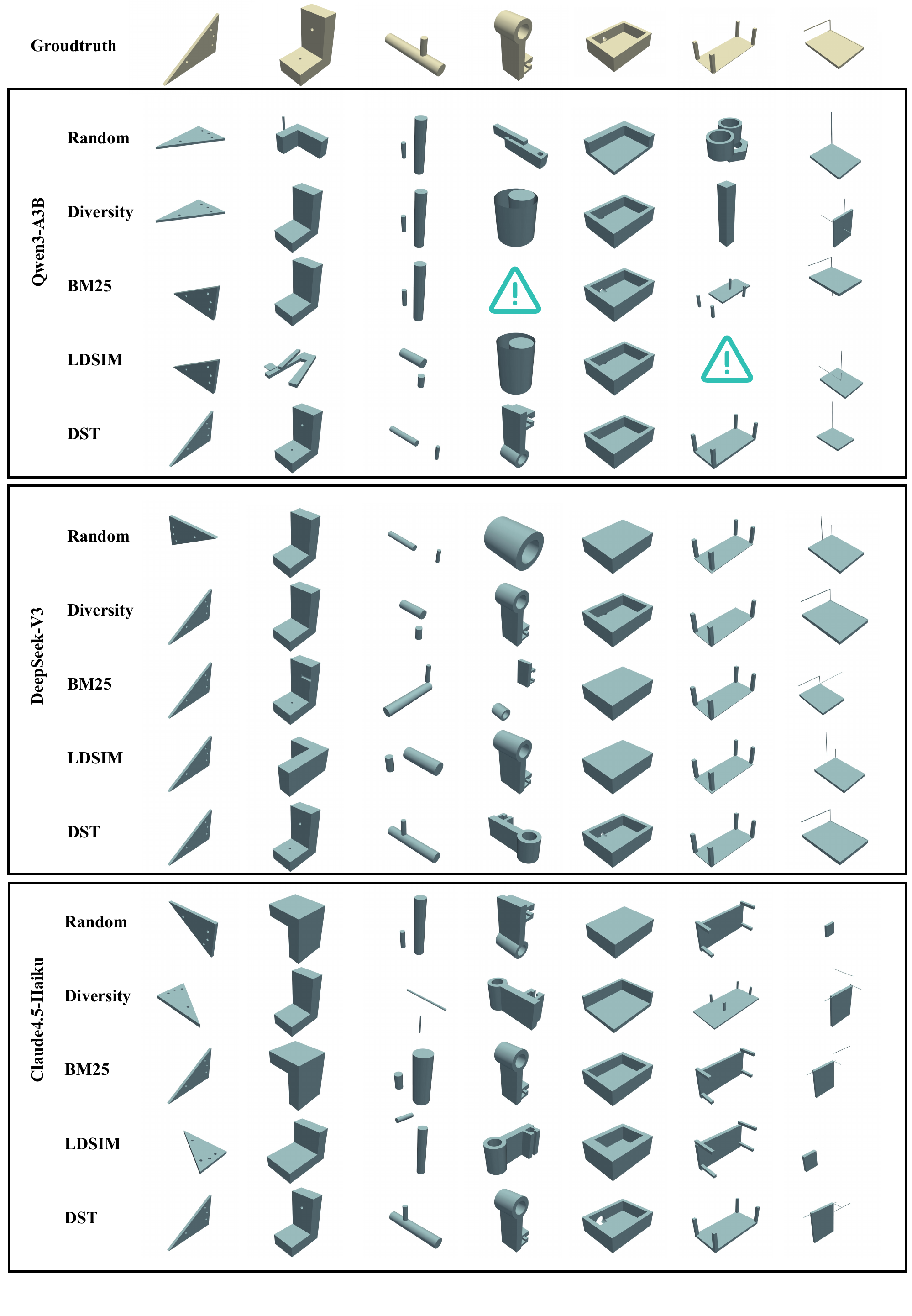}
    \caption{
        3D CAD objects generated by All three Models using different methods, where exclamation marks indicate the CAD code unable to produce valid CAD objects.
    }
    \label{fig:morecase}
\end{figure*}

\end{document}